\newcommand{\beq}{\vspace{0mm}\begin{equation}}
\newcommand{\eeq}{\vspace{0mm}\end{equation}}
\newcommand{\beqs}{\vspace{0mm}\begin{eqnarray}}
\newcommand{\eeqs}{\vspace{0mm}\end{eqnarray}}
\newcommand{\barr}{\begin{array}}
\newcommand{\earr}{\end{array}}
\def\d{\mbox{d}}
\newcommand{\zv}{\boldsymbol{z}}
\newcommand{\thetav}{\boldsymbol{\theta}}
\newcommand{\E}{\mathbb{E}}
\newtheorem{thm}{Theorem} %[section]
\newtheorem{cor}[thm]{Corollary}
\renewcommand\footnotemark{}
\begin{document}
\title{Sample Size Dependent Species Models} 
%\title{Split or Merge? Mixed Poisson Processes and\\ Exchangeable Random Partitions}
%\title{Mixed Poisson Processes and\\ the Representation of Cluster Structures} 
\author{Mingyuan~Zhou
%\thanks{\emph{Address for correspondence}: Department of Information, Risk, and Operations Management, 2110 Speedway Stop B6500, Austin, TX 78712, USA.
%\emph{Email:} \texttt{mingyuan.zhou@mccombs.utexas.edu}. M. Zhou is also  with the Department of Statistics and Data Sciences, University of Texas at Austin. S. G. Walker is with the Department of Mathematics and Department of Statistics and Data Sciences, University of Texas at Austin.
%% The author's work was supported by the Department of Information, Risk, and Operations Management, McCombs %School of Business and the Division of Statistics and Scientific Computation at the University of Texas at Austin.}
%}
and Stephen~G~Walker
\\ 
%Department of Information, Risk, and Operations Management\\
The University of Texas at Austin%, Austin, TX 78712, USA\\
%\texttt{mingyuan.zhou@mccombs.utexas.edu} \\%\\
\thanks{ %\emph{Address for correspondence}: Department of Information, Risk, and Operations Management, 2110 Speedway Stop B6500, Austin, TX 78712, USA.
%\emph{Email:} \texttt{mingyuan.zhou@mccombs.utexas.edu}.
M. Zhou is with the Department of Information, Risk, and Operations Management, McCombs School of Business, % and Department of Statistics and Data Sciences, 
and S. G. Walker is with the Departments of Mathematics and  Statistics \& Data Sciences, 
the University of Texas at Austin, Austin, TX 78712, USA. \emph{Emails:} \texttt{mingyuan.zhou@mccombs.utexas.edu, s.g.walker@math.utexas.edu}.
%% The author's work was supported by the Department of Information, Risk, and Operations Management, McCombs School of Business and the Division of Statistics and Scientific Computation at the University of Texas at Austin.}
}
%\\ and \\
%Stephen G Walker \\
%Department of Mathematics, University of Texas at Austin
}

\maketitle

\vspace{-3mm}

\begin{spacing}{1.5}
\begin{abstract}
%A  fundamental problem in ecology is to measure species diversity, which is usually  conducted via the analysis of a random sample of species, where %a random number of observations are categorized into different species, with the count of each species 
%both the number of distinct species and the size of each species are random. This motivates us to

Motivated by the fundamental problem of measuring species diversity, this paper introduces the concept of a cluster structure to define an exchangeable cluster probability function that governs the  joint distribution of a random count  and its exchangeable random partitions. A cluster structure, naturally arising from a completely random measure mixed Poisson process, allows the probability distribution of the random partitions of a subset of a sample to be dependent on the sample size, a distinct and motivated feature that differs it from  a partition structure. A generalized negative binomial process  model is proposed to generate a cluster structure, where in the prior the number of clusters is finite and Poisson distributed, and the cluster sizes follow a truncated negative binomial distribution. We construct a nonparametric Bayesian estimator of Simpson's index of diversity under the generalized negative binomial process. We illustrate our results through the analysis of two real sequencing count datasets.  
 
 \vspace{5mm}
\emph{Keywords}: %Asymptotics, 
Bayesian nonparametrics, 
 exchangeable cluster/partition probability functions, generalized gamma process, generalized negative binomial process, generalized Chinese restaurant sampling formula, partition structure, species sampling. 
\end{abstract}
\end{spacing}

\newpage
%\input{Introduction}
%\vspace{-6mm}
\section{Introduction}

A fundamental problem in biological and ecological studies is to measure the degree of diversity of a population whose individuals are classified into different groups; see \cite{Fisher1943, simpson1949measurement,hill1973diversity} and \cite{magurran2004measuring}.
The rapid development of modern sequencing technologies  also generates significant recent interest in the measurement of population diversity  using samples 
 summarized as the frequencies of observed sequences
% of sequence count data 
\citep{hughes2001counting,
shaw2008s,bunge2014estimating,guindani2014bayesian}. 
The Simpson's index of diversity, widely used to measure species evenness,  is defined as the probability for two individuals randomly selected  from a population to be from different groups  \citep{simpson1949measurement}. 
Thus, if $\pi_k$ denotes the population probability for an individual to be in group $k$, with 
$\sum_{k\geq 1}\pi_k=1$, then the Simpson's index of diversity is defined as 
\begin{equation}
S=1-\sum_{k=1}^K\pi_k^2\label{simp}
\end{equation}
which is also understood to be $P(z_1\ne z_2)$, where $z_i$ is the group individual $i$ is assigned to. Here, $K$ could be finite or infinite though \citet{simpson1949measurement} assumed it to be finite. 

A sample estimate for (\ref{simp}), which is unbiased, is given by
\beq
\widehat{S}=1- \sum_{k=1}^K \frac{n_k(n_{k}-1)}{n(n-1)}, \label{eq:S_hat}
\eeq
where 
$$n_k=\sum_{i=1}^n {\bf 1}(z_i=k).$$ 

Alongside Simpson's index of diversity, other diversity indices have been proposed to measure species richness; see \citet{bunge1993estimating,chao2005species} and \citet{bunge2014estimating} for reviews. Recent nonparametric Bayesian approaches to species diversity,  focusing on the study of species richness, derive  the distribution of the number of new species via  $n'$ new individuals randomly selected from the population, given a sample of size $n$; see \cite{lijoi2007bayesian,lijoi2008bayesian} and \cite{favaro2009bayesian,favaro2013conditional}. These papers form the basis for  Bayes nonparametric estimators of the Simpson's index of diversity,  as in \citet{cerquetti2012bayesian}. 

The underlying structure of the Bayesian species sampling models  are built on Kingman's concept of a partition structure, \citep{kingman1978random,kingman1978representation}, which defines a family of %sequence of 
consistent probability distributions for random partitions of a set $[m]:=\{1,\ldots,m\}$. The sampling consistency requires  the  probability distribution of the random partitions of a subset of size $m$ of a set of size $n\geq m$ to be the same for all $n$.
More specifically, for a random partition $\Pi_m=\{A_1,\ldots,A_l\}$ of the set $[m]$, 
where there are $l$ clusters and each element $i\in[m]$ belongs to one and only one set $A_k$ from $\Pi_m$, such a constraint requires
that $P(\Pi_m|n)=P(\Pi_m|m)$ does not depend on $n$, where $P(\Pi_m|n)$ denotes the marginal partition probability for $[m]$ when it is known the sample size is $n$. 
As further developed in \citep{pitman1995exchangeable,csp}, if %the probability for an arbitrary partition $\Pi_m$ 
$P(\Pi_m|m)$ depends only on the number and sizes of the $(A_k)$, regardless of their order, then it is called an exchangeable partition probability function (EPPF) of $\Pi_m$, expressed as  $P(\Pi_m=\{A_1,\ldots,A_l\}|m)=p_m(n_1,\ldots,n_l)$, where $n_k=|A_k|$. 
The sampling consistency  %of $P(\Pi_m)$ as $m$ varies %
amounts to an addition rule \citep{csp,Gnedin_deletion} for %
the EPPF; that $p_1(1) = 1$ and
\beqs\label{eq:addrule}
p_m(n_1,\ldots,n_l) =  p_{m+1}(n_1,\ldots,n_l,1)+ \sum_{k=1}^l p_{m+1}(n_1,\ldots,n_k+1,\ldots,n_l).
\eeqs
An EPPF of  $\Pi_m$ satisfying this constraint is considered as an EPPF of $\Pi :=(\Pi_1,\Pi_2,\ldots)$. 
For an EPPF of $\Pi$, 
 $\Pi_{m+1}$ can be constructed from  $\Pi_m$ by assigning element $(m+1)$ to $A_{z_{m+1}}$ based on the prediction rule as
\beq\notag
z_{m+1}|\Pi_m=
\begin{cases} \vspace{3mm}
l+1& \mbox{with probability  }\frac{p_{m+1}(n_1,\ldots,n_l,1)}{p_m(n_1,\ldots,n_l)} , \\ 
k & \mbox{with probability  }\frac{p_{m+1}(n_1,\ldots,n_k+1,\ldots,n_l)}{p_m(n_1,\ldots,n_l)}.\end{cases}
\eeq

A basic EPPF of $\Pi$ is the Ewens sampling formula \citep{ewens1972sampling,Antoniak74}. Moving beyond the Ewens sampling formula, 
various  approaches, including the Pitman-Yor process 
 \citep{perman1992size,pitman1997two}, Poisson-Kingman models \citep{pitman2003poisson}, species sampling \citep{Pitman96somedevelopments}, %,lee2013}, 
  stick-breaking priors \citep{ishwaran2001gibbs}, and Gibbs-type random partitions \citep{gnedin2006exchangeable},   have been proposed to construct more general EPPFs of $\Pi$.  
See \citet{muller2004nonparametric}, \citet{BeyondDP} and \citet{Muller2013} for reviews. 
Among these approaches,
there has been increasing interest in 
normalized random measures with independent increments (NRMIs)
\citep{regazzini2003distributional}, %,lijoi2005inverseGaussian,lijoi2007controlling,james2009posterior}, 
where a completely random measure \citep{Kingman,PoissonP} with a finite and strictly positive total random mass is normalized to construct a random probability measure. For example, the normalized gamma process is a Dirichlet process \citep{ferguson73}. %, the marginalization of which leads to a variant of the Ewens sampling formula \citep{DP_Mixture_Antoniak}. 
More advanced completely random measures, such as the generalized gamma process of \citet{brix1999generalized}, can be employed to produce more general exchangeable random partitions  of $\Pi$ \citep{pitman2003poisson,csp,lijoi2007controlling}. However,  the expressions of the EPPF and its associated prediction rule usually involve integrations that are difficult to calculate. 

With respect to the Simpson's measure of diversity, it is our contention that a prior model for this quantity; i.e. $P(z_1\ne z_2)$ should depend on $n$ and hence we write $P(z_1\ne z_2|n)$ meaning that in general, rather than the marginal distribution of $(z_1,\ldots,z_m)$, with $(z_{m+1},\ldots,z_n)$ integrated out, being independent of the sample size $n\geq m$, it actually does depend on $n$. 

The motivation for this is that as $n$ increases, so could the possible groups which are available for classification. It is anticipated that unknown species emerge, which is different from known species first being seen, as samples are collected. Hence, the probability, according to the experimenter's prior model, that $z_1$ and $z_2$ belong to the same group will, for example, diminish with $n$ if, as the sample size increases, it is thought more appropriate for individuals to be reclassified into different species.
In short, if all the possible species are known upfront then it is possible to classify $z_1$ and $z_2$ once and for all having seen just them. However, if there is uncertainty about the species, even whether $z_1$ and $z_2$ are the same species or not, which in life is often reality, then reassessing their classifications with $n$ should occur and hence a model for which $P(z_1\ne z_2)$ changes with $n$ is motivated.

Consequently, in a Bayesian context, we will be facilitating the dependence of  $(z_1,\ldots,z_m)$, for all $m\leq n$, on $n$. To develop this theme, and to allow the mathematics to proceed in a neat way, and without forcing any restrictions, we make $n$ a random object within the model.

We work at a fundamental level with a normalized completely  random measure. Hence, the total (random) mass  is unidentified and consequently arbitrary. We take this opportunity to use it to model the, prior to observation, random sample size $n$.  More specifically, we model the sample size $n$ as a Poisson random variable %from a Poisson distribution, 
the  mean of which is parameterized by the total random mass of a completely random measure $G$ over a complete and separable metric space $\Omega$. The total random mass $G(\Omega)$ 
is used to normalize $G$ to obtain a random probability measure $G(\cdot)/G(\Omega)$.
Linking $n$ to $G(\Omega)$ with a Poisson distribution %not only introduces dependence between the normalized random probability measure and the sample size, %be dependent on the sample size, 
%but also 
makes the scale of $G$ become identifiable.
 With $G$ marginalized out, the joint distribution of  $n$ and its exchangeable random partition $\Pi_n$ is called an exchangeable cluster probability function (ECPF). %, %, which is in a fully factorized form 
On observing a sample of size $n$, we are interested in the EPPF $P(\Pi_n|n)$ and marginalizing over $n-m$ elements we would consider $P(\Pi_m|n)$. Note that distinct from a partition structure, we no longer require $P(\Pi_m|n)=P(\Pi_m|m)$ for $n>m$ in  a cluster structure.

Specifically, we consider a generalized negative binomial (NB) process model %count-mixture model 
where
$G$ is drawn from a generalized gamma process of \citet{brix1999generalized}. 
A draw from the generalized NB process (gNBP) represents a cluster structure with a Poisson distributed finite number of clusters, whose sizes 
follow a truncated NB distribution. Marginally, the sample size 
follows a generalized NB distribution.  
These three count distributions and the prediction rule are determined by a discount, a probability and a mass parameter. These parameters are convenient to infer using the fully factorized ECPF. Since $P(\Pi_m|n)= P(\Pi_m|m)$ is often not true for $n>m$,  the EPPF of the gNBP, which is derived by applying Bayes' rule on the ECPF and the generalized NB distribution, generally violates the addition rule and hence is dependent on the sample size.  This EPPF will be  referred as the generalized Chinese restaurant sampling formula. To generate an exchangeable random partition of $[n]$ under this EPPF, we show we could use either a Gibbs sampler or a recursively-calculated sequential prediction rule.

The layout of the paper is as follows: In Section 2 we provide all the necessary preliminary notation and a description of normalized random measures, while in Section 3 we introduce the new model for constructing sample size dependent species models. In Section 4 we apply the theory in Section 3 to the generalized negative binomial process and we present real data applications in Section 5. We end the paper with a brief conclusion and provide   the proofs of theorems and corollaries in the Appendix.

\section{Preliminaries}\label{sec:preliminary}

In this section we provide the mathematical foundations for an independent increment process with no Gaussian component.  These are pure jump processes and for us will have finite limits so that the process can be normalized by the total sum of the jumps to provide a random distribution function.
The most well known of such processes is the gamma process (see, for example, \citet{ferguson1972representation}) and we will be specifically working with a generalized gamma process in Section 2.1.

\subsection{Generalized Gamma Process}

The generalized gamma process,  which we will denote by $\mbox{g}\Gamma\mbox{P}(G_0,a,1/c)$, is a completely random (independent increment) measure defined on the product space $\mathbb{R}_+\times \Omega$, where $a< 1$ is a discount parameter and $1/c$ is a scale parameter \citep{brix1999generalized}. %, and $G_0$ is a finite and continuous base measure over $\Omega$ .  
It assigns independent infinitely divisible generalized gamma random variables $G(A_j)\sim{{}}\mbox{gGamma}(G_0(A_j),a,1/c)$ to disjoint Borel sets $A_j\subset \Omega$, %. For each subset $A\subset\Omega$, $G(A)$ follows a generalized gamma distribution,
with Laplace transform given by
\beq\label{eq:Laplace}
%L_{a,c,G_0(A)}(s)
\E\left[e^{-\phi\,G(A)}\right] = \exp\left\{-\frac{G_0(A)}{a}\left[(c+\phi)^a-c^a\right]\right\}.
\eeq
The L\'{e}vy measure of the generalized gamma process
%$G\sim{{}}\mbox{g}\Gamma\mbox{P}(a,1/c,G_0)$
can be expressed as
\beqs\label{eq:LevyGGP}
\nu(\d s \,,\d\omega) = \frac{1}{\Gamma(1-a)}r^{-a-1}e^{-cr}\,\d s \,G_0(\d\omega).
\eeqs
The connection between (\ref{eq:Laplace}) and (\ref{eq:LevyGGP}), not given here, is the well known form for the Laplace transform of an infinitely divisible random variable.

When $a\rightarrow0$, we recover the gamma process,   and if $a=1/2$, we recover the inverse Gaussian process \citep{lijoi2005inverseGaussian}. 
 A draw $G$ from %the generalized  gamma process
$\mbox{g}\Gamma\mbox{P}(G_0,a,1/c)$ %consists of countably infinite points,
 can be expressed as
\beq
G = \sum_{k=1}^{K} r_k \delta_{\omega_k},\notag \eeq
with $K\sim\mbox{Po}(\nu^+)$ and $(r_k,\omega_k)\stackrel{i.i.d.}{\sim} \pi(\d s\,,\d\omega)$, 
where $r_k=G(\omega_k)$ is the weight for atom $\omega_k$ %, $G(\Omega)=\sum_{k=1}^{K} r_k$ is the total random mass, 
and $\pi(\d s\, ,\d\omega)\nu^{+} \equiv \nu(\d s\,,\d\omega)$. 
Except where otherwise specified, we only 
consider $a<1$ and $c>0$. %or $0<a<1$ and $c=0$
If $0\le a<1$, since the Poisson intensity $\nu^+ = \nu(\mathbb{R}_+\times \Omega) = \infty$ (i.e., $K=\infty$ a.s.) and
  $
 \int_{\mathbb{R}_+\times \Omega} \min\{1, s\} \nu(\d s\, \d\omega)  %= \gamma_0c^{a-1}
 $
 is finite, a draw from $\mbox{g}\Gamma\mbox{P}(G_0,a,1/c)$ consists of countably infinite atoms. On the other hand, if $a<0$, then $\nu^+=-\gamma_0c^a/a$ and thus $K\sim \mbox{Po}(-\gamma_0c^a/a)$ (i.e., $K$ is finite a.s.) and $r_k\stackrel{i.i.d.}{\sim}\mbox{Gamma}(-a,1/c)$. This process will be seen again in Section 4.

\subsection{Normalized Random Measures % with Independent Increments
}\label{NRMI}

A NRMI model \citep{regazzini2003distributional} is
%Since 
a normalized completely random measure 
$$\widetilde{G}=G/G(\Omega)$$ where 
$G(\Omega)=\sum_{k=1}^{K} r_{k}$ is the total random mass, which is required to be finite and strictly positive. 
Note that the strict positivity of $G(\Omega)$ implies that $\nu^+=\infty$ and hence $K=\infty$  a.s. \citep{regazzini2003distributional,BeyondDP}. For us we will not necessarily be assuming that $K=\infty$ a.s. In fact our model is such that $K=0\iff n=0$, which is coherent, and, moreover, $P(K=0|n>0)=0$.  

Here we describe how the random allocations of individuals to groups are distributed based on the independent random jumps of the generalized gamma process. 
With a random draw $G = \sum_{k=1}^{K} r_k \delta_{\omega_k}$, by introducing a categorical latent  variable $z$ with
$
P(z=k|G) =r_k/G(\Omega), \notag %~z=1,\cdots,K,
$ 
when a sample of size $n$ is observed  %$\xv=(x_1,\cdots,x_m)$
we have
%\begin{align}
\beqs\label{eq:f_G_N}
p(\zv|G,n)= \prod_{i=1}^n \frac{r_{z_i}}{\sum_{k=1}^{K} r_k}
=  \left(\sum_{k=1}^{K} r_k\right)^{-n}%\prod_{k=1}^{K} n_k!}
\prod_{k=1}^{K} {r_k^{n_k}} , 
\eeqs
where  $\zv=(z_1,\ldots,z_n)$ is a sequence of categorical random variables  indicating the cluster memberships, $n_k = \sum_{i=1}^n  \mathbf{1}(z_{i}=k)$ is the number of data points assigned to category $k$, and $n=\sum_{k=1}^{K} n_k$.  A random partition $\Pi_n$ of $[n]$ is defined by the ties between the $(z_i)$. So at this point, (\ref{eq:f_G_N}) is standard.

Now (\ref{eq:f_G_N}) exhibits a lack of identifiabilty in that the scale of the $(r_k)$ is arbitrary; the model is the same if we set $\widetilde{r}_k=\kappa\,r_k$ for any $\kappa>0$. Hence, the total mass 
$\sum_{k=1}^K r_k$ is unidentified. 

Additionally, for reasons outlined in Section 1, we want, having marginalized out $G$, for $n$ to remain, and for us to have $p(\zv|n)$ to remain. For the standard models, when $G$ is integrated out, $n$ disappears and we have $p(\zv)$ depending solely on the parameters  of the model.

We solve both these issues by allowing $n$ to depend on $G$ via 
$$p(n|G)=\mbox{Po}[G(\Omega)],$$
from which we have independently
$$p(n_k|G)=\mbox{Po}(r_k).$$
We note here then that the prior model is for $p(n,G)$ and, consequently, $p(G|n)$ means $G$ depends on $n$; i.e. for each $n$ we will have a different random measure for $G$.

We provide in Section 3 the general form for the prior $p(\zv|n)$ and in Section 4 the specific case when $G$ is a generalized gamma process. 
In Section 5 we use MCMC methods to estimate the posterior values of Simpson's index of diversity using real sequence frequency count data. 

Posterior inference via MCMC is also simplified by our approach.
Following \citet{%nieto2004normalized,
james2009posterior}, %,griffin2011posterior%,favaromcmc
 a specific auxiliary variable  $T>0$, with %(v|m,\sum_{k=1}^{K} r_k)
 $ %\beq\label{eq:auxiliary}
p_T(t|n,G(\Omega)) =\mbox{Gamma}(n,1/G(\Omega))
$, 
can be introduced to yield a fully factorized likelihood, stimulating the development of a number of posterior simulation algorithms including 
\citet{griffin2011posterior,barrios2012modeling} and \citet{%griffin2011sequential,
 favaromcmc}.  
Marginalizing out $G$ and then $T$ from that fully factorized likelihood leads to an EPPF of $\Pi$ \citep{pitman2003poisson,csp,lijoi2007controlling}.  However, the prediction rule of the EPPF may not be easy to calculate.

\section{Structure of Model}\label{sec:CountMixture}
As has been previously mentioned,  we  link the sample size $n$ to the total random mass of $G$ with a Poisson distribution; 
\beq
p(n| G)=\mbox{Po}\big[ G(\Omega)\big].\label{pois}
\eeq 
Since the $n$ data points are clustered according to the normalized random probability measure $G/G(\Omega)$, we have the equivalent sampling mechanism given by
\beq
p(n_k|G)=\mbox{Po}( r_k)\quad\mbox{independently for}\quad k=1,2,\ldots\,, \notag
\eeq 
and, since $n=\sum_k n_k$, we obviously recover (\ref{pois}).

Therefore, we link directly the cluster sizes $(n_k)$ to the weights $(r_k)$ with  independent Poisson distributions, which is in itself an appealing intuitive feature. %\citep{BNBP_PFA_AISTATS2012}. 
The mechanism to generate a sample of arbitrary size is now well defined and $G$ is no longer scaled freely.  The new construction also allows $G(\Omega)=0$, for which $n\equiv 0$ a.s.  Allowing $G(\Omega)=0$ with a nonzero probability relaxes the requirement of $\nu^+=\infty$ (i.e., $K=\infty$ a.s.). 

 \begin{figure}[!tb]
\begin{center}
\includegraphics[width=0.96\columnwidth]{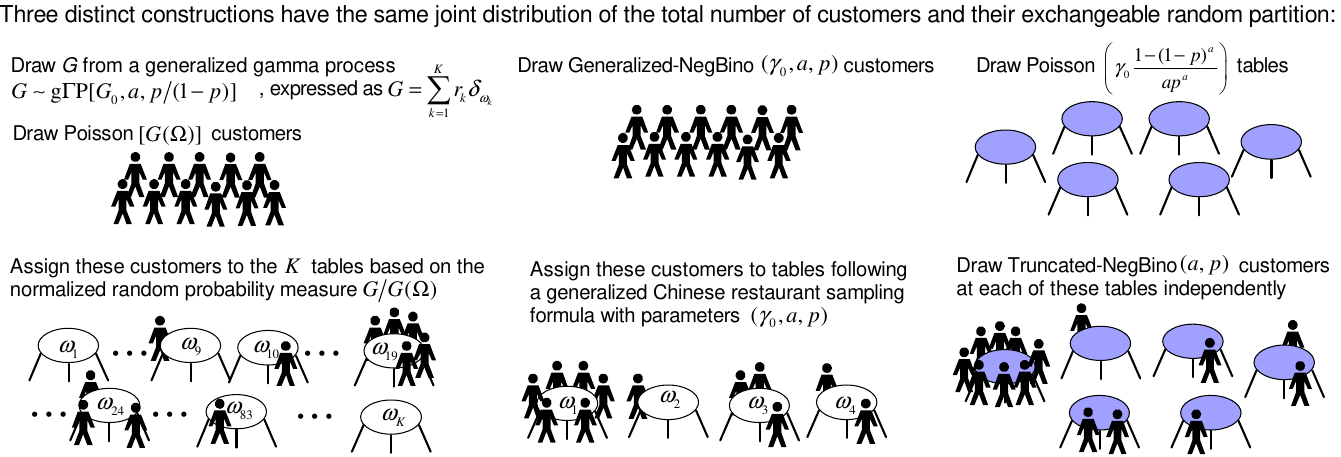}
\end{center}
\vspace{-6mm}
\caption{ \label{fig:gNBPdraw} \small The cluster structure of the generalized NB process %count-mixture model
 can be either constructed by assigning $\mbox{Pois}[G(\Omega)]$ number of customers to tables following a normalized generalized gamma process $G/G(\Omega)$, where $G\sim\mbox{g}\Gamma\mbox{P}[G_0,a,p/(1-p)]$, or constructed by assigning $n\sim\mbox{gNB}(\gamma_0,a,p)$ number of customers to tables following a generalized Chinese restaurant sampling formula $\zv\sim$~$\mbox{gCRSF}(n,\gamma_0,a,p)$, where $\gamma_0=G_0(\Omega)$. A equivalent cluster structure can also be generated by first drawing $\mbox{Pois}\big(\gamma_0\frac{1-(1-p)^a}{ap^a}\big)$ number of tables, and then drawing  $\mbox{TNB}(a,p)$ number of customers independently at each table. 
}%\vspace{-2mm}
\end{figure}

A key insight of this paper is that %the clustering structure, introduced by 
a completely random measure mixed Poisson process produces a cluster structure that is identical in distribution to both (i) the one produced  %, is equivalent in distribution to the one  introduced 
by assigning the total random count of the Poisson process into exchangeable random partitions, using  the random probability measure normalized from that completely random measure, and (ii) the one produced by assigning the total (marginal) random count $n$ of the mixed Poisson process into exchangeable random partitions using an EPPF of $\Pi_n$.  For example, when the generalized gamma process \citep{brix1999generalized} is used as the completely random measure in this setting, our key discoveries are summarized in 
Figure \ref{fig:gNBPdraw}, %. These points will be made clearer in the following discussions.  %
which will be discussed further in Section 4.

We note that  \citet{BNBP_PFA_AISTATS2012} and \citet{NBP2012} have explored related ideas to mix a gamma or beta process with a negative binomial process, and use that hierarchical process for mixture modeling of grouped data. Yet the authors marginalized neither  the beta nor gamma process due to technical difficulties and relied on finite truncation for inference. We will discuss at the end of the paper that the ideas and techniques developed in this paper serve as the foundation for the authors to develop priors for random count matrices and understand the marginal combinatorial structures of the beta-negative binomial process.

In the following theorem, we establish  the marginal model for the $(n_k)$ with $G$ marginalized out. The proof for this theorem is provided in the Appendix. 

\begin{thm}[Compound Poisson Process]\label{thm:compoundPoisson}
 It is that the $G$ mixed Poisson process is also a compound Poisson process; a random draw of which can be expressed as 
 %\begin{align}
$$X(\cdot)=\sum_{k=1}^{l} n_k \,\delta_{\omega_k}(\cdot)\quad\mbox{with }l\sim\emph{\mbox{Po}}\left[G_0(\Omega)\int_{0}^\infty(1-e^{-s})\rho(\d s)\right],$$
and independently
$$P(n_k=j)=\frac{ {\int_{0}^\infty s^j e^{-s} \rho(\d s)} }{{j!}\int_{0}^\infty(1-e^{-s})\rho(\d s)}~~\mbox{for}~~j=1,2,\ldots\notag$$
 %(n_k,\omega_k) \sim \tilde{\nu}(dnd\omega)/\tilde{\nu}^+, 
 %\end{align} 
 where $\int_{0}^\infty(1-e^{-s})\rho(\d s) %\le \int_{0}^\infty\min\{1,s\}\rho(ds) 
<\infty$ is a condition required for the characteristic functions of $G$ to be well defined, $\omega_k\stackrel{iid}{\sim} g_0$ and $g_0(\d\omega)=G_0(\d\omega)/G_0(\Omega)$. % is the base distribution. 
 \end{thm}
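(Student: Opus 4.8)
The plan is to work entirely with Laplace functionals and to match the one for $X$ against the known form for a compound Poisson random measure. Write $G=\sum_k r_k\delta_{\omega_k}$ and recall that, conditionally on $G$, the cluster sizes satisfy $n_k\mid r_k\sim\mbox{Po}(r_k)$ independently across $k$. For a bounded nonnegative test function $f$ on $\Omega$, the Poisson generating function gives $\E[e^{-f(\omega_k)n_k}\mid r_k]=\exp\{-r_k(1-e^{-f(\omega_k)})\}$, so conditioning on $G$ and multiplying over the atoms,
\[
\E\!\left[e^{-\int_\Omega f\,\d X}\,\Big|\,G\right]
=\prod_k\exp\!\left\{-r_k\big(1-e^{-f(\omega_k)}\big)\right\}
=\exp\!\left\{-\int_\Omega\big(1-e^{-f(\omega)}\big)\,G(\d\omega)\right\}.
\]

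Next I would marginalize out $G$. Because $\int_\Omega(1-e^{-f})\,\d G$ is a linear functional of the completely random measure $G$, I can apply the general L\'evy--Khintchine form of its Laplace functional, which for L\'evy measure $\rho(\d s)\,G_0(\d\omega)$ reads $\E[e^{-\int h\,\d G}]=\exp\{-\int_\Omega\int_0^\infty(1-e^{-sh(\omega)})\rho(\d s)G_0(\d\omega)\}$ (the general form underlying~(\ref{eq:Laplace})). Taking $h(\omega)=1-e^{-f(\omega)}$ yields
\[
\E\!\left[e^{-\int_\Omega f\,\d X}\right]
=\exp\!\left\{-\int_\Omega\!\int_0^\infty\Big(1-e^{-s(1-e^{-f(\omega)})}\Big)\rho(\d s)\,G_0(\d\omega)\right\}.
\]
This is already of L\'evy--Khintchine type, so the task reduces to showing that its exponent is the L\'evy measure of a compound Poisson random measure and reading off the three ingredients.

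The crux is the elementary identity
\[
1-e^{-s(1-e^{-u})}=\sum_{j=1}^\infty\frac{s^je^{-s}}{j!}\big(1-e^{-ju}\big),
\]
obtained by writing $e^{-s(1-e^{-u})}=e^{-s}\sum_{j\geq0}(se^{-u})^j/j!$ and using $\sum_{j\geq1}s^je^{-s}/j!=1-e^{-s}$. Substituting $u=f(\omega)$ and interchanging sum and integral (justified by Tonelli, since every term is nonnegative for $f\geq0$), the exponent becomes $\sum_{j\geq1}\frac{1}{j!}\big(\int_0^\infty s^je^{-s}\rho(\d s)\big)\int_\Omega(1-e^{-jf(\omega)})G_0(\d\omega)$. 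This exhibits the L\'evy measure of $X$ as supported on the positive integers, $\mu(\{j\},\d\omega)=\frac{1}{j!}\big(\int_0^\infty s^je^{-s}\rho(\d s)\big)G_0(\d\omega)$, which is exactly the L\'evy measure of a compound Poisson process. Summing over $j$ and over $\Omega$, and again using $\sum_{j\geq1}s^je^{-s}/j!=1-e^{-s}$, gives the total intensity $G_0(\Omega)\int_0^\infty(1-e^{-s})\rho(\d s)$, which is the Poisson rate of the number of atoms $l$; normalizing the $j$-dependent factor by this total produces the stated law $P(n_k=j)$; and since $\mu$ factorizes as a function of $j$ times $G_0(\d\omega)$, the jump size $n_k$ and the location $\omega_k\sim g_0=G_0/G_0(\Omega)$ are independent, with the pairs i.i.d. Uniqueness of the Laplace functional then delivers equality in distribution.

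I expect the main obstacle to be the expansion step and the recognition that the resulting exponent is precisely a compound Poisson L\'evy measure; this is where the finiteness condition $\int_0^\infty(1-e^{-s})\rho(\d s)<\infty$ enters, as it guarantees that the rate of $l$ is finite, so $X$ has finitely many atoms almost surely and the representation $X=\sum_{k=1}^{l} n_k\delta_{\omega_k}$ is well posed. The remaining pieces---the Poisson generating function, the Tonelli interchange, and the total-mass bookkeeping---are routine.
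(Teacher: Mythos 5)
Your proof is correct and follows essentially the same route as the paper's: both compute the Laplace transform of the $G$-mixed Poisson process, expand $e^{-s(1-e^{-\phi})}$ as a power series in $e^{-\phi}$, and identify the result with a compound Poisson process whose rate is $G_0(\Omega)\int_0^\infty(1-e^{-s})\rho(\d s)$ and whose jump law is the stated $P(n_k=j)$. The only (harmless) difference is that you work with the full Laplace functional and read the compound Poisson structure directly off the L\'evy--Khintchine exponent, whereas the paper evaluates $\E[e^{-\phi X(A)}]$ for a fixed set $A$ and matches it against the separately computed transform of an explicitly constructed candidate compound Poisson process.
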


The compound Poisson representation dictates %in the prior 
the model to have a Poisson distributed finite number of  clusters, whose sizes follow a positive discrete distribution. % $(\pi_j)$.
 The mass parameter $\gamma_0=G_0(\Omega)$ has a linear relationship with the expected number of clusters, % and hence the expected number of samples, 
 but has no direct impact on the cluster-size distribution.  % , including both the cluster-number and cluster-size distributions, specified by the count-mixture model.
 Note that a draw from $G$ contains $K< \infty$ or $K=\infty$ atoms a.s., but only $l$ of them would be associated with nonzero counts if $G$ is mixed with a Poisson process. 
Since the cluster indices are unordered and exchangeable, 
without loss of generality, in the following discussion,  we relabel %the indexes of the 
the atoms with nonzero counts % in $\mathcal{D}_n=\{\omega_k\}_{k:n_k>0}$ 
in order of appearance from $1$ to $l$  %$l:=|\mathcal{D}_n|=L(\Omega)$ 
and then $z_i\in\{1,\ldots,l\}$ for $i=1,\ldots,n$,  with $n_k>0$ if and only if $1\le k \le l$ and $n_k=0$ if $k>l$. %Note that $l$ is a simplified notation of $|\mathcal{D}_m|$, which depends on %is a function of 
%the sample size $m$.

\begin{cor}[Exchangeable Cluster/Partition Probability Functions]\label{thm:ECPF}
The model 
has a fully factorized  exchangeable cluster probability function (ECPF) as
$$p(\zv,n|\gamma_0,\rho) = %\E_{G} [f(\zv,n|G)] =
 \frac{\gamma_0^l}{n!} \exp\left\{\gamma_0\int_{0}^\infty(e^{-s}-1)\rho(\d s)\right\}
\prod_{k=1}^l \int_0^\infty s^{n_k} e^{-s} \rho(\d s),$$
the marginal distribution for the sample size $n=X(\Omega)$ has probability generating function
$$\E[t^{n}|\gamma_0,\rho] = \exp\left\{ \gamma_0 \int_{0}^\infty (e^{-(1-t)s}-1)\rho(\d s)\right\}$$ 
 and probability mass function 
$$\left. p_N(n|\gamma_0,\rho)=\frac{d^n (\E[t^n|\gamma_0,\rho])} {d t^n } \right |_{t=0},$$
and an exchangeable partition probability function (EPPF) of $\Pi_n$ as
$$p(\zv|n,\gamma_0,\rho) = {p(\zv,n|\gamma_0,\rho) }\big/{p_N(n|\gamma_0,\rho)}.\notag$$ 
\end{cor}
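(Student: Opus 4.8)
The plan is to establish the three claims in sequence: first the exchangeable cluster probability function (ECPF), which carries all the content, then read off the probability generating function of $n$, and finally obtain the EPPF by the definition of conditional probability. For the ECPF I would begin from the conditional likelihood already assembled in Section \ref{NRMI}. Multiplying $p(\zv|G,n)$ by $p(n|G)=\mbox{Po}[G(\Omega)]$, the two opposite powers of $G(\Omega)$ cancel and leave
\[
p(\zv,n|G)=\frac{e^{-G(\Omega)}}{n!}\prod_{k=1}^{l} r_k^{n_k},
\]
where the product runs only over the $l$ occupied atoms. It then remains to take the expectation over $G$, whose atoms $\{(r_k,\omega_k)\}$ form a Poisson process on $\R_+\times\Omega$ with intensity $\rho(\d s)\,G_0(\d\omega)$.

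The key device is the Palm (Mecke/Campbell) formula. Writing the occupied atoms as an ordered tuple of distinct points of the Poisson process and summing, the Mecke formula reinserts those points into the background field, so the surviving factor $e^{-G(\Omega)}$ acquires an extra $\prod_k e^{-r_k}$ precisely at the occupied atoms. This is the crucial algebraic effect: after integrating a single occupied atom against the intensity, its weight moment $r_k^{n_k}$ becomes the tilted integral $\gamma_0\int_0^\infty s^{n_k}e^{-s}\rho(\d s)$, with $\gamma_0=G_0(\Omega)$ coming from integrating $G_0(\d\omega)$; the remaining pure expectation of $e^{-G(\Omega)}$ evaluates, by the Laplace functional of the completely random measure, to $\exp\{\gamma_0\int_0^\infty(e^{-s}-1)\rho(\d s)\}$. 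Collecting the $l$ occupied atoms and retaining the $1/n!$ from the Poisson likelihood yields exactly the stated ECPF. As an alternative more visibly anchored in Theorem \ref{thm:compoundPoisson}, I would instead sum the probability of an ordered cluster-size tuple over all orderings of a fixed size multiset to get the multiset probability, then divide by the number of sequences $\zv$ realizing that multiset (equivalently, the number of set partitions of $[n]$ with the given block sizes); the symmetry factors $\prod_j a_j!$ (with $a_j$ the number of blocks of size $j$) cancel, and the per-cluster $n_k!$ cancel against $1/n!$, producing the same expression.

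For the generating function of $n=X(\Omega)$ the quickest route is to condition on $G$: since $n\mid G\sim\mbox{Po}[G(\Omega)]$ we have $\E[t^{n}\mid G]=e^{-(1-t)G(\Omega)}$, so $\E[t^{n}]=\E\!\left[e^{-(1-t)G(\Omega)}\right]$ is merely the Laplace functional evaluated at $\phi=1-t$, giving $\exp\{\gamma_0\int_0^\infty(e^{-(1-t)s}-1)\rho(\d s)\}$; the mass function $p_N(n)$ is then its $n$-th Taylor coefficient, i.e. the stated derivative at $t=0$. The EPPF is immediate, $p(\zv\mid n)=p(\zv,n)/p_N(n)$, by the definition of conditional probability, valid wherever $p_N(n)>0$.

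The main obstacle is the ECPF marginalization, specifically getting the combinatorics exactly right. In the Palm-formula route the delicate point is that the factor $e^{-G(\Omega)}$ involves the very atoms being selected, so one must reinsert them correctly — this is exactly what converts raw weight moments into the $e^{-s}$-tilted integrals $\int_0^\infty s^{n_k}e^{-s}\rho(\d s)$. In the Theorem \ref{thm:compoundPoisson} route the delicate point is instead the cancellation of the multiset-symmetry factors between the compound Poisson law and the count of sequences carrying a given partition. Once the ECPF is in hand, the generating function and the EPPF are routine.
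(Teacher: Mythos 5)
Your proposal is correct, but your primary route is not the one the paper takes. The paper's ``proof'' is essentially a one-line remark: it invokes the compound Poisson representation of Theorem~\ref{thm:compoundPoisson} (which supplies $l\sim\mbox{Po}\big[\gamma_0\int_0^\infty(1-e^{-s})\rho(\d s)\big]$ and the i.i.d.\ cluster-size law) and multiplies by the one-to-many-mapping coefficient $\frac{l!}{n!}\prod_{k=1}^l n_k!$; the $\gamma^l$ and $1/\gamma^{\,l}$ factors and the $l!$, $n_k!$ terms cancel to give the stated ECPF. That is precisely your \emph{alternative} argument, and your account of why the symmetry factors $\prod_j a_j!$ drop out is the correct justification of the coefficient the paper merely asserts. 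Your \emph{primary} route --- writing $p(\zv,n|G)=\frac{e^{-G(\Omega)}}{n!}\prod_k r_k^{n_k}$ (the powers of $G(\Omega)$ from the Poisson likelihood and the normalization cancelling) and then applying the multivariate Mecke/Palm formula to the Poisson process of atoms --- is a genuinely different and self-contained derivation that bypasses Theorem~\ref{thm:compoundPoisson} entirely. It buys a cleaner conceptual picture (the exponential tilting $\int_0^\infty s^{n_k}e^{-s}\rho(\d s)$ arises because the reinserted atoms contribute to $e^{-G(\Omega)}$, and the residual $\E[e^{-G(\Omega)}]$ is read off the Laplace functional), at the cost of invoking heavier point-process machinery; one small point worth making explicit is that the sum over \emph{ordered} distinct atom tuples in the Mecke formula matches the order-of-appearance labelling of clusters without an extra $l!$, since distinct ordered tuples correspond to disjoint events. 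Your treatment of the PGF (Laplace functional at $\phi=1-t$) and of the EPPF (Bayes' rule) coincides with the paper's implicit argument and is fine.
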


The proof of this is straightforward given the representation in Theorem \ref{thm:compoundPoisson} and given the one-to-many-mapping combinatorial coefficient taking $(n_1,\ldots,n_l,l)$ to $(z_1,\ldots,z_n,n)$ is
$$\frac{l!}{n!}\,\prod_{k=1}^l n_k!\,\,.$$ 

\begin{cor}[Prediction Rule]\label{thm:predict} 
Let $l^{-i}$ represent the number of clusters in $\zv^{-i}:=\zv\backslash z_i$ and $n_k^{-i}:=\sum_{j\neq i} {\bf 1}(z_j=k)$. We can express the prediction rule of the  model as
\beq%\label{eq:PredictRuleG}
P(z_{i} = k|\zv^{-i},n,\gamma_0,\rho) \propto
\begin{cases}\vspace{2mm}
\frac{\int_{0}^\infty s^{n_k^{-i}+1} e^{-s} \rho(\d s)}{\int_0^\infty s^{n_k^{-i}}e^{-s}  \rho(\d s)} , & \emph{\mbox{for }} k=1,\ldots,l^{-i};\\
 \gamma_0\int_0^\infty se^{-s}\rho(\d s), & \emph{\mbox{if } }k=l^{-i}+1.
\end{cases}\notag
\eeq
This prediction rule can be used to simulate an exchangeable random partition of $[n]$ via Gibbs sampling.
\end{cor}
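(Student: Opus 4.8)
The plan is to obtain the prediction rule directly from the exchangeable cluster probability function in Corollary~\ref{thm:ECPF} by treating it as an unnormalized joint mass function and forming the full conditional of $z_i$. Since we condition on both $\zv^{-i}$ and the sample size $n$, the total count is held fixed while $z_i$ ranges over its candidate values, and by the definition of conditional probability
\[
P(z_i = k \mid \zv^{-i}, n, \gamma_0, \rho) = \frac{p(\zv, n \mid \gamma_0, \rho)}{\sum_{k'} p\big((\zv^{-i}, z_i = k'),\, n \mid \gamma_0, \rho\big)},
\]
where $\zv = (\zv^{-i}, z_i = k)$. The denominator is a sum over all $k'$ and hence does not depend on $k$, so it suffices to track how the ECPF varies with the assignment of $z_i$; the same proportionality follows equally from the EPPF $p(\zv \mid n) = p(\zv, n)/p_N(n)$, since $p_N(n)$ is constant in $k$ and cancels.

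Next I would isolate the $k$-dependent factors in the ECPF. Writing $p(\zv, n \mid \gamma_0, \rho) = \frac{\gamma_0^l}{n!} \exp\{\gamma_0 \int_0^\infty (e^{-s}-1)\rho(\d s)\} \prod_{j=1}^l \int_0^\infty s^{n_j} e^{-s} \rho(\d s)$, I observe that the prefactor $1/n!$ and the exponential term are identical for every candidate value of $z_i$ and may be absorbed into the proportionality constant. Hence $P(z_i = k \mid \zv^{-i}, n, \gamma_0, \rho)$ is proportional to $\gamma_0^l \prod_{j=1}^l \int_0^\infty s^{n_j} e^{-s} \rho(\d s)$, evaluated at the partition obtained by inserting $z_i = k$ into $\zv^{-i}$.

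Then I would compute this quantity in the two cases, dividing throughout by the common factor $\gamma_0^{l^{-i}} \prod_{j=1}^{l^{-i}} \int_0^\infty s^{n_j^{-i}} e^{-s} \rho(\d s)$ associated with $\zv^{-i}$. If $z_i$ joins an existing cluster $k \in \{1, \ldots, l^{-i}\}$, the number of clusters remains $l^{-i}$ and only the size of cluster $k$ increases from $n_k^{-i}$ to $n_k^{-i}+1$, so the ratio collapses to $\int_0^\infty s^{n_k^{-i}+1} e^{-s} \rho(\d s) \big/ \int_0^\infty s^{n_k^{-i}} e^{-s} \rho(\d s)$, the first branch. If instead $z_i$ opens a new cluster $k = l^{-i}+1$ of size one, the cluster count increases by one and an extra factor $\gamma_0 \int_0^\infty s\, e^{-s} \rho(\d s)$ appears, giving the second branch.

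The computation is elementary; the only point requiring care is the conditioning on a fixed $n$. Unlike the sequential prediction rule derived from the addition rule for an EPPF of $\Pi$, here reassigning $z_i$ does not alter the sample size, so no marginalization over the sample size enters and the normalizer $p_N(n)$ cancels cleanly, which is precisely why the rule can depend on $n$ without contradiction. I would finish by noting that the resulting weights are strictly positive and depend on $\zv^{-i}$ only through the cluster sizes $(n_k^{-i})$ and the count $l^{-i}$, so they define a valid Gibbs update whose stationary distribution is the EPPF $p(\zv \mid n, \gamma_0, \rho)$.
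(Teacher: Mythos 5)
Your proposal is correct and follows essentially the same route as the paper: both apply Bayes' rule to the fully factorized ECPF of Corollary~\ref{thm:ECPF}, observe that the $1/n!$ and exponential prefactors (equivalently, the factor $p(\zv^{-i},n-1\mid\gamma_0,\rho)$ the paper pulls out) are constant in $k$, and read off the ratio of $k$-dependent factors for the "join existing cluster" and "open new cluster" cases. Your added remark about why conditioning on fixed $n$ lets the normalizer cancel is consistent with, and slightly more explicit than, the paper's two-line appendix argument.
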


The proof for this Corollary is provided in the Appendix. In the next section we will study a particular process: the generalized negative binomial process, whose ECPF has a simple analytic expression and whose exchangeable random partitions can not only be simulated via Gibbs sampling using the above prediction rule, but also be sequentially constructed using a recursively calculated prediction rule. 

\section{Generalized Negative Binomial Process}\label{sec:gNBP} 
In the following discussion, 
we study the generalized NB process (gNBP) model where $G\sim\mbox{g}\Gamma\mbox{P}[G_0,a,p/(1-p)]$  with $a<0$, $a=0$ or $0<a<1$. 
Here we apply the results in Section 3 to this specific case.
Using (\ref{eq:LevyGGP}), we have
$$\int_{0}^\infty s^{n}e^{-s}\rho(\d s)= {\frac{\Gamma(n-a)}{{\Gamma(1-a)} }p^{n-a}}\quad \mbox{and} \quad\int_0^\infty(1-e^{-s})\rho(\d s) = \frac{1-(1-p)^a}{ap^{a}}.$$
Marginalizing out $\lambda$ from 
$n| \lambda\sim\mbox{Po}(\lambda)$ with $\lambda\sim{{}}\mbox{gGamma}[\gamma_0,a,p/(1-p)]$, leads to a generalized NB distribution; i.e. $n\sim\mbox{gNB}(\gamma_0,a,p)$, with shape parameter $\gamma_0$, discount parameter $a<1$, and probability parameter $p$. 
Denote by $\sum_{*}$ as the summation over all sets of positive integers $(n_1,\ldots,n_l)$ with ${\sum_{k=1}^l n_k = n}$.  As derived in the Appendix, the probability mass function (PMF) of the generalized NB distribution can be expressed as
\beq\label{eq:f_M0}
p_N(n|\gamma_0,a,p)
 = 
\frac{p^n}{n!}e^{-{\gamma_0}\frac{1-(1-p)^a}{ap^a}} \sum_{l=0}^n \gamma_0^l p^{-al} S_a(n,l),
\eeq 
where 
\begin{align}\label{eq:gStirling}
S_a(n,l) = \frac{n!}{l!}\sum_{*} \prod_{k=1}^l \frac{\Gamma(n_k-a)}{n_k!\Gamma(1-a)}= \frac{1}{l!a^{l}}\sum_{k=0}^l (-1)^k \binom{l}{k}  \frac{\Gamma(n-ak)}{\Gamma(-ak)}
\end{align}
%whose right-hand side is easier to calculate. 
 are %Toscano's formula or 
generalized Stirling numbers of the first kind \citep{charalambides2005combinatorial,csp}, which can be recursively calculated via $S_a(n,1)={\Gamma(n-a)}/{\Gamma(1-a)}$, $S_a(n,n)=1$ and $S_a(n+1,l) = (n-al)S_a(n,l)+S_a(n,l-1)$. Note that when $-ak$ is a nonnegative integer,  $\Gamma(-ak)$ is not well defined but $\Gamma(n-ak)/\Gamma(-ak)=\prod_{i=0}^{n-1}(i-ak)$ is still well defined.

%\subsection{Generalized Negative Binomial Process}
Marginalizing out $G$ in the generalized gamma process mixed Poisson process 
\beq\label{eq:gGaPP0}
X|G\sim\mbox{PP}(G)\quad\mbox{and}\quad G\sim{{}}\mbox{g}\Gamma\mbox{P}\left[G_0,a, {p}/{(1-p)}\right]
\eeq %\eeqs %in (\ref{eq:gGaPP}) 
leads to a generalized NB process
$
X\sim\mbox{gNBP}(G_0,a,p),
$
such that for each $A\subset \Omega$, $X(A)\sim\mbox{gNB}(G_0(A),a,p)$. This process is also a compound Poisson process as
$$ %\begin{align}%\label{eq:GNBPdraw}
X(\cdot)=\sum_{k=1}^{l} n_k\delta_{\omega_k}(\cdot),~l\sim\mbox{Po}\Big(\gamma_0\frac{1-(1-p)^a}{ap^a}\Big),~n_k  \stackrel{iid}{\sim} \mbox{TNB}(a,p),~\omega_k \stackrel{iid}{\sim} g_0,$$ where $\mbox{TNB}(a,p)$ denotes a truncated NB distribution, with PMF
\beqs\label{eq:TNB}
p_U(u|a,p)= \frac{\Gamma(u-a)}{u!\Gamma(-a)}\frac{p^u(1-p)^{-a}}{1-(1-p)^{-a}},~u=1,2,\ldots.
\eeqs

The ECPF of the gNBP %generalized NB process 
model is given by
\beqs \label{eq:f_Z_M}
p(\zv,n|\gamma_0,a,p) 
=\frac{1}{n!}e^{-\gamma_0\frac{1-(1-p)^a}{ap^{a}}}
\gamma_0^{l{}} p^{n-al{}}
\prod_{k=1}^{l{}} \frac{\Gamma(n_k-a)}{{\Gamma(1-a)} }.
\eeqs 
The EPPF of $\Pi_n$ %of the generalized NB process 
is the ECPF in (\ref{eq:f_Z_M}) divided by the marginal distribution of $n$ in (\ref{eq:f_M0}), given by
\begin{align} \label{eq:EPPF}
p(\zv|n,\gamma_0,a,p) = 
\frac{\gamma_0^{l}  p^{-al}}{
\sum_{\ell=0}^n \gamma_0^\ell p^{-a\ell} S_a(n,\ell)}
\prod_{k=1}^{l{}}\frac{\Gamma(n_k-a)}{\Gamma(1-a)}.
\end{align}
We define  the EPPF in (\ref{eq:EPPF})  %in (\ref{eq:EPPF}) % probability function
as the generalized Chinese restaurant sampling formula (gCRSF),  and we denote a random draw under this EPPF as %given by 
$$\zv|n\sim{\mbox{gCRSF}}(n,\gamma_0,a,p).$$
%As derived in the Appendix, 
The conditional distribution of the cluster number in a sample of size $n$ %conditioning on the sample size $m$ 
can be expressed as
\begin{align}\label{eq:f_L2_0}
p_L(l|n,\gamma_0,a,p) =\frac{1}{l!}\sum_{*}\frac{n!}{\prod_{k=1}^l n_k}p(\zv|n,\gamma_0,a,p)= \frac{\gamma_0^{l}  p^{-al}S_a(n,l)}{
\sum_{\ell=0}^n \gamma_0^\ell p^{-a\ell} S_a(n,\ell)}.
\end{align}

Note that if $a\rightarrow 0$, we  recover, from (\ref{eq:EPPF}), the Ewens sampling formula  which is the EPPF of the Chinese restaurant process (CRP) \citep{aldous:crp}. 
The prediction rule for  the EPPF in (\ref{eq:EPPF})   can be expressed as % can be expressed as
\beq\label{eq:PredictRule}
%p(z_{m+1} = k|\zv,m,\gamma_0,a,p)
P(z_{i} = k|\zv^{-i},n,\gamma_0,a,p)  \propto
\begin{cases}
n_k^{-i} -a, &  {\mbox{for }} k=1,\ldots,l^{-i};\\ %\frac{n_k-a}{m+\gamma_0p^{-a}-al}, 
\gamma_0 p^{-a}, & {\mbox{if } }k=l^{-i}+1. %\frac{\gamma_0 p^{-a}} {m+\gamma_0p^{-a}-al}, 
\end{cases}
\eeq
This prediction rule can be used in a Gibbs sampler to simulate an exchangeable random partition $\zv|n\sim{\mbox{gCRSF}}(n,\gamma_0,a,p)$ of $[n]$. However, a large number of Gibbs sampling iterations may be required to generate an unbiased sample from this EPPF. Below we present a sequential construction for  this EPPF. 

Marginalizing out $z_n$ from (\ref{eq:EPPF}), we have
\begin{align}
{p(z_{1:n-1}|n,\gamma_0,a,p)}%&=\frac{j!f(z_{1:j}|j,\gamma_0,\rho)f_J(j|\gamma_0,\rho)}{m!f_M(m|\gamma_0,\rho)} p^{i} \prod_{i=j}^{m-1}(\gamma_0p^{-a}+ i- a l_{(i)})\\
~~=&~~~p(z_{1:n-1}|n-1,\gamma_0,a,p) \notag\\
&\times \frac{\sum_{\ell =0}^{n-1} \gamma_0^\ell p^{-a\ell}S_a(n-1,\ell)}{\sum_{\ell=0}^n \gamma_0^\ell p^{-a\ell }S_a(n,\ell)}\left[\gamma_0p^{-a}+ (n-1)- a l_{(n-1)}\right],\notag
\end{align}
where $z_{1:i}:=\{z_1,\ldots,z_i\}$, $l_{(i)}$ denotes the number of partitions in $z_{1:i}$ and $l_{(n)}=l$.
Further marginalizing out $z_{n-1},\ldots,z_{i+1}$, we have
\begin{align}
{p(z_{1:i}|n,\gamma_0,a,p)}
&=p(z_{1:i}|i,\gamma_0,a,p)\frac{\sum_{\ell=0}^{i} \gamma_0^\ell p^{-a\ell}S_a(i,\ell)}{\sum_{\ell=0}^n \gamma_0^\ell p^{-a\ell }S_a(n,\ell)}   R_{n,\gamma_0,a,p}(i,l_{(i)}) \notag\\
&= \frac{R_{n,\gamma_0,a,p}(i,l_{(i)}) \gamma_0^{l_{(i)}}  p^{-al_{(i)}}}{\sum_{\ell=0}^n \gamma_0^\ell p^{-a\ell}S_a(n,\ell)}  \prod_{k\,:\,n_{k,(i)}>0}\frac{\Gamma(n_{k,(i)}-a)}{\Gamma(1-a)} ,\label{eq:SizeEPPF}
\end{align}
where $n_{k,(i)}:=\sum_{j=1}^i \mathbf{1}(z_j=k)$; $R_{n,\gamma_0,a,p}(i,j)\equiv 1$ if $i=n$  and is recursively calculated for $i=n-1,m-2,\ldots,1$ with
\beq\label{eq:R}
R_{n,\gamma_0,a,p}(i,j) = R_{n,\gamma_0,a,p}(i+1,j)(i-a j) +  R_{n,\gamma_0,a,p}(i+1,j+1)\gamma_0p^{-a}.
\eeq
We name (\ref{eq:SizeEPPF}) as a size-dependent EPPF as its distribution on an exchangeable random partition of $[i]$ is a function of the sample size $n$.
Note that if $a=0$, then $$\frac{\sum_{l=0}^{i} \gamma_0^lp^{-al}S_a(i,l)}{\sum_{l=0}^n \gamma_0^lp^{-al}S_a(n,l)} = \frac{\sum_{l=0}^{i} \gamma_0^l |s(i,l)|}{\sum_{l=0}^n \gamma_0^l|s(n,l)|}  = \frac{\Gamma(i+\gamma_0)}{\Gamma(n+\gamma_0)}$$ and $R_{n,\gamma_0,a=0,p}(i,l) = \frac{\Gamma(n+\gamma_0)}{\Gamma(i+\gamma_0)}$, and hence ${p(z_{1:i}|n,\gamma_0,a=0,p)}
\equiv p(z_{1:i}|i,\gamma_0,a=0,p)$. Thus when $a=0$, the EPPF becomes independent of the sample size,  which is a well-known property for the Chinese restaurant process.

\begin{cor}[Sequential Construction]
Since $p(z_{i+1} |z_{1:i},n,\gamma_0,a,p)  = \frac{p(z_{1:i+1} | n,\gamma_0,a,p)} {p(z_{1:i} | n,\gamma_0,a,p)}$, 
conditioning on the sample size $n$, the sequential prediction rule of the generalized Chinese restaurant sampling formula $\zv|n\sim\emph{\mbox{gCRSF}}(n,\gamma_0,a,p) $ can be expressed as
 \beq\label{eq:PredictRulej}
P(z_{i+1} = k|z_{1:i},n,\gamma_0,a,p) =
\begin{cases}\vspace{3mm}
(n_{k,(i)} -a) \frac{R_{n,\gamma_0,a,p}(i+1, ~l_{(i)})}{R_{n,\gamma_0,a,p}(i,~ l_{(i)})}, &  {\mbox{for }} k=1,\ldots,l_{(i)};\\  \gamma_0 p^{-a}\frac{R_{n,\gamma_0,a,p}(i+1, ~l_{(i)}+1)}{R_{n,\gamma_0,a,p}(i, ~l_{(i)})}, & {\mbox{if } }k=l_{(i)}+1; \end{cases}
\eeq
where $i=1,\ldots,n-1$.\end{cor}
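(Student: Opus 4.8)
The plan is to obtain the sequential prediction rule directly from the size-dependent EPPF in (\ref{eq:SizeEPPF}) by forming the ratio $p(z_{1:i+1}\mid n,\gamma_0,a,p)/p(z_{1:i}\mid n,\gamma_0,a,p)$, exactly as the statement indicates. The key observation is that the normalizing sum $\sum_{\ell=0}^n \gamma_0^\ell p^{-a\ell} S_a(n,\ell)$ appearing in the denominator of (\ref{eq:SizeEPPF}) does not depend on the partition of $[i]$ or of $[i+1]$, so it cancels identically in the ratio. Everything therefore reduces to tracking how the three remaining factors respond to appending element $i+1$: the power $\gamma_0^{l_{(i)}}p^{-al_{(i)}}$, the product $\prod_{k:\,n_{k,(i)}>0}\Gamma(n_{k,(i)}-a)/\Gamma(1-a)$, and the recursion term $R_{n,\gamma_0,a,p}(i,l_{(i)})$.

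First I would treat the two cases separately. If $z_{i+1}=k$ for an existing cluster $k\le l_{(i)}$, then $l_{(i+1)}=l_{(i)}$, the power $\gamma_0^{l_{(i)}}p^{-al_{(i)}}$ is unchanged, and only the $k$-th Gamma factor is altered, contributing $\Gamma(n_{k,(i)}+1-a)/\Gamma(n_{k,(i)}-a)=n_{k,(i)}-a$ via the functional equation of the Gamma function, while all other Gamma factors cancel; the $R$ term contributes the ratio $R_{n,\gamma_0,a,p}(i+1,l_{(i)})/R_{n,\gamma_0,a,p}(i,l_{(i)})$. Collecting these yields the first line of (\ref{eq:PredictRulej}). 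If instead $z_{i+1}=l_{(i)}+1$ opens a new cluster, then $l_{(i+1)}=l_{(i)}+1$, so the power gains a factor $\gamma_0 p^{-a}$, the product gains a single new factor $\Gamma(1-a)/\Gamma(1-a)=1$, and the $R$ term contributes $R_{n,\gamma_0,a,p}(i+1,l_{(i)}+1)/R_{n,\gamma_0,a,p}(i,l_{(i)})$, giving the second line.

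These computations are entirely mechanical, so there is no substantial obstacle; the only point requiring care is to confirm that the resulting expression is a bona fide conditional probability mass function, i.e. that the $l_{(i)}+1$ cases sum to one. I would verify this as a consistency check. Summing the first line over $k=1,\ldots,l_{(i)}$ uses $\sum_{k=1}^{l_{(i)}}(n_{k,(i)}-a)=i-a\,l_{(i)}$, and adding the new-cluster term gives $R_{n,\gamma_0,a,p}(i,l_{(i)})^{-1}\big[(i-a\,l_{(i)})\,R_{n,\gamma_0,a,p}(i+1,l_{(i)})+\gamma_0 p^{-a}\,R_{n,\gamma_0,a,p}(i+1,l_{(i)}+1)\big]$, which equals $1$ precisely by the recursion (\ref{eq:R}) defining $R$ with $j=l_{(i)}$. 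This shows that the recursion for $R$ is exactly the normalization condition for the sequential rule, which both validates (\ref{eq:PredictRulej}) and explains why the auxiliary quantity $R_{n,\gamma_0,a,p}$ must satisfy (\ref{eq:R}).
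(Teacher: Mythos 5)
Your proof is correct and follows essentially the same route as the paper: the corollary is obtained by taking the ratio of the size-dependent EPPF (\ref{eq:SizeEPPF}) at consecutive indices, with the normalizing sum cancelling and the Gamma-function and $R$-factors producing the two cases. Your added observation that the recursion (\ref{eq:R}) is exactly the normalization condition for the prediction rule is a nice consistency check but does not change the argument.
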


With this sequential prediction rule, similar to an EPPF of $\Pi$, we can construct $\Pi_{i+1}$ from $\Pi_i$ in a sample of size $n$ by assigning element $(i+1)$ to $A_{z_{i+1}}$.
 When $a=0$, we have $$\frac{R_{n,\gamma_0,a,p}(i+1, ~l_{(i)})}{R_{n,\gamma_0,a,p}(i,~ l_{(i)})} = \frac{R_{n,\gamma_0,a,p}(i+1, ~l_{(i)}+1)}{R_{n,\gamma_0,a,p}(i,~ l_{(i)})} = \frac{\Gamma(i+\gamma_0)}{\Gamma(i+1+\gamma_0)} =  \frac{1}{i+\gamma_0},$$  and this sequential prediction rule 
becomes the same as that of a Chinese restaurant process with concentration parameter $\gamma_0$.

\begin{cor} 
The distribution of the number of clusters in $z_{1:i}$ in a sample of size $n$ can be expressed as
\begin{align}\label{eq:l_i}
{p(l_{(i)}|n,\gamma_0,a,p)}
&=p(l_{(i)}|i,\gamma_0,a,p)\frac{\sum_{\ell=0}^{i} \gamma_0^\ell p^{-a\ell} S_a(i,\ell)}{\sum_{\ell=0}^n \gamma_0^\ell p^{-a\ell}S_a(n,\ell)}   R_{n,\gamma_0,a,p}(i,l_{(i)}),\notag\\
&=\frac{ \gamma_0^{l_{(i)}}p^{-al_{(i)}}S_a(i,l_{(i)})R_{n,\gamma_0,a,p}(i,l_{(i)})}{\sum_{\ell=0}^n \gamma_0^\ell p^{-a\ell}S_a(n,\ell)}.
\end{align}
\end{cor}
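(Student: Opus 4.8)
The plan is to obtain $p(l_{(i)}|n,\gamma_0,a,p)$ by marginalizing the size-dependent EPPF of (\ref{eq:SizeEPPF}) over all exchangeable random partitions of $[i]$ having exactly $l_{(i)}$ blocks, mirroring the derivation of $p_L(l|n,\gamma_0,a,p)$ in (\ref{eq:f_L2_0}). First I would observe that the right-hand side of (\ref{eq:SizeEPPF}) is a function only of $i$, the block count $l_{(i)}$, and the unordered multiset of block sizes $\{n_{k,(i)}\}$; crucially, the factor $R_{n,\gamma_0,a,p}(i,l_{(i)})$ depends on $i$ and $l_{(i)}$ alone and not on the individual $n_{k,(i)}$, as is clear from its defining recursion (\ref{eq:R}). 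Hence, summing over all set partitions of $[i]$ into $l$ nonempty blocks, both $R_{n,\gamma_0,a,p}(i,l)$ and $\gamma_0^{l}p^{-al}$ pull outside the sum, leaving only $\sum \prod_{k=1}^l \Gamma(n_{k}-a)/\Gamma(1-a)$ over those partitions, divided by the same normalizer $\sum_{\ell=0}^n \gamma_0^\ell p^{-a\ell}S_a(n,\ell)$.

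Next I would evaluate that residual sum using the combinatorial identity already invoked for (\ref{eq:f_L2_0}): the number of labeled sequences $z_{1:i}$ realizing an ordered composition $(n_1,\ldots,n_l)$ of $i$ is $i!/\prod_k n_k!$, and each set partition into $l$ unordered blocks is recovered $l!$ times across ordered compositions once multiplicities of equal block sizes are accounted for, so that $\sum_{\text{partitions into }l\text{ blocks}}\prod_{k=1}^l \frac{\Gamma(n_k-a)}{\Gamma(1-a)} = \frac{i!}{l!}\sum_{*}\prod_{k=1}^l \frac{\Gamma(n_k-a)}{n_k!\,\Gamma(1-a)}$, where $\sum_*$ runs over positive compositions of $i$. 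By the definition of the generalized Stirling number of the first kind in (\ref{eq:gStirling}), this sum equals exactly $S_a(i,l)$. Substituting back yields the closed form in the second line of (\ref{eq:l_i}).

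Finally I would recover the first, factorized line of (\ref{eq:l_i}) by applying the identical summation to the ordinary EPPF (\ref{eq:EPPF}) with sample size set to $i$, which gives $p_L(l|i,\gamma_0,a,p)=\gamma_0^{l}p^{-al}S_a(i,l)\big/\sum_{\ell=0}^{i}\gamma_0^\ell p^{-a\ell}S_a(i,\ell)$, namely (\ref{eq:f_L2_0}) evaluated at $i$. Multiplying and dividing the second line by $\sum_{\ell=0}^{i}\gamma_0^\ell p^{-a\ell}S_a(i,\ell)$ then expresses $p(l_{(i)}|n,\gamma_0,a,p)$ as $p(l_{(i)}|i,\gamma_0,a,p)$ times the ratio of the two normalizers and the factor $R_{n,\gamma_0,a,p}(i,l_{(i)})$, matching the first line exactly.

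The main obstacle I anticipate is the bookkeeping in the combinatorial identity, that is, correctly converting the sum over unordered set partitions into the ordered composition sum $\sum_*$ defining $S_a(i,l)$, and in particular justifying the $1/l!$ normalization in the presence of blocks of equal size. This is precisely the step underlying (\ref{eq:f_L2_0}), so once it is settled there it transfers verbatim; the only genuinely new ingredient is the observation that $R_{n,\gamma_0,a,p}(i,l_{(i)})$ is independent of the block sizes, which is what lets it factor out and makes the marginalization collapse cleanly to a single generalized Stirling number.
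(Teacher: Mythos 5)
Your proposal is correct and follows essentially the same route as the paper, which derives the result directly from the size-dependent EPPF (\ref{eq:SizeEPPF}) together with the standard partition-count marginalization already used for (\ref{eq:f_L2_0}); your key observation that $R_{n,\gamma_0,a,p}(i,l_{(i)})$ depends only on $i$ and $l_{(i)}$ and therefore factors out of the sum, leaving exactly $S_a(i,l_{(i)})$ by the definition (\ref{eq:gStirling}), is precisely what makes the paper's one-line derivation go through.
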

 This can be directly derived using  (\ref{eq:SizeEPPF}) and the relationship between the EPPF and the distribution of the number of clusters. From this PMF, we obtain a useful identity
\beq\notag
 {\sum_{\ell=0}^n \gamma_0^\ell p^{-a\ell}S_a(n,\ell)} = \gamma_0p^{-a}R_{n,\gamma_0,a,p}(1,1),
\eeq
which could be used to calculate the PMF of the generalized NB distribution in (\ref{eq:f_M0}) and the EPPF in (\ref{eq:EPPF}) without the need to compute the generalized Stirling numbers $S_a(n,l)$. 

\begin{cor} Given the model parameters $\gamma_0$, $a$ and $p$, the probability for two elements uniformly at random selected from a random sample of size $n$ to be in two different  groups   %$z_1=z_2$ in a %sample of size $n$ 
can be expressed as
\beq\label{eq:z12}
P(z_1\neq z_2|n,\gamma_0,a,p) = \frac{\gamma_0 p^{-a}R_{n,\gamma_0,a,p}(2,2)}{R_{n,\gamma_0,a,p}(1,1)}= \left[1+ \frac{1-a}{\gamma_0p^{-a}}\frac{R_{n,\gamma_0,a,p}(2,1)}{R_{n,\gamma_0,a,p}(2,2)}\right]^{-1}.
\eeq
When $a=0$, for $n\ge 2$, we have $$P(z_1\neq z_2|n,\gamma_0,a=0,p)\equiv\frac{\gamma_0}{1+\gamma_0}.$$
\end{cor}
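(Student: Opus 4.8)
The plan is to read the two probabilities directly off the size-dependent EPPF in (\ref{eq:SizeEPPF}), specialized to $i=2$, and then to combine them using the normalization identity $\sum_{\ell=0}^n \gamma_0^\ell p^{-a\ell}S_a(n,\ell) = \gamma_0 p^{-a}R_{n,\gamma_0,a,p}(1,1)$ recorded just above the statement. The key simplification is that the partition of $\{1,2\}$ has only two possible outcomes, so these two probabilities automatically sum to one and no separate normalization argument is needed.

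First I would set $i=2$ in (\ref{eq:SizeEPPF}). The event $\{z_1=z_2\}$ is the single-block partition, with $l_{(2)}=1$ and one block of size $n_{1,(2)}=2$, whose size-factor is $\Gamma(2-a)/\Gamma(1-a)=1-a$; the event $\{z_1\neq z_2\}$ is the two-singleton partition, with $l_{(2)}=2$ and each block of size one, whose size-factor is $\left[\Gamma(1-a)/\Gamma(1-a)\right]^2=1$. Writing $Z:=\sum_{\ell=0}^n \gamma_0^\ell p^{-a\ell}S_a(n,\ell)$ for the common denominator, this gives
\[
P(z_1=z_2|n) = \frac{(1-a)\,\gamma_0 p^{-a}\,R_{n,\gamma_0,a,p}(2,1)}{Z}, \qquad
P(z_1\neq z_2|n) = \frac{\gamma_0^2 p^{-2a}\,R_{n,\gamma_0,a,p}(2,2)}{Z}.
\]
For the first displayed equality I would substitute $Z = \gamma_0 p^{-a}R_{n,\gamma_0,a,p}(1,1)$ into the second probability, so that one factor of $\gamma_0 p^{-a}$ cancels and leaves $\gamma_0 p^{-a}R_{n,\gamma_0,a,p}(2,2)/R_{n,\gamma_0,a,p}(1,1)$. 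For the second equality I would use $P(z_1\neq z_2|n)=\left[1+P(z_1=z_2|n)/P(z_1\neq z_2|n)\right]^{-1}$; forming the ratio causes $Z$ to drop out and produces exactly the bracketed term $\tfrac{1-a}{\gamma_0 p^{-a}}\,R_{n,\gamma_0,a,p}(2,1)/R_{n,\gamma_0,a,p}(2,2)$.

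For the case $a=0$ I would invoke the closed form $R_{n,\gamma_0,0,p}(i,l)=\Gamma(n+\gamma_0)/\Gamma(i+\gamma_0)$ noted earlier, which is independent of $l$; hence $R_{n,\gamma_0,0,p}(2,1)=R_{n,\gamma_0,0,p}(2,2)$, their ratio is one, and the second equality collapses to $\left[1+1/\gamma_0\right]^{-1}=\gamma_0/(1+\gamma_0)$, which is independent of $n$ for $n\ge 2$ as expected. Since the recursion for $R_{n,\gamma_0,a,p}$ and the identity $Z=\gamma_0 p^{-a}R_{n,\gamma_0,a,p}(1,1)$ are already in hand, I do not anticipate any serious obstacle; the only point requiring care is the bookkeeping of the size-factor $\prod_{k}\Gamma(n_{k,(2)}-a)/\Gamma(1-a)$ for the two partition types and the observation that these two outcomes exhaust the sample space, so that the ratio trick and the cancellation of $Z$ are legitimate.
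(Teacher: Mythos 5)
Your proof is correct and essentially matches the paper's: the paper reads the first equality directly off the sequential prediction rule (\ref{eq:PredictRulej}) at $i=1$, $z_2=2$, and obtains the second from the recursion (\ref{eq:R}) for $R_{n,\gamma_0,a,p}(1,1)$, which is exactly the complementarity $P(z_1=z_2|n)+P(z_1\neq z_2|n)=1$ that you invoke. Starting instead from the size-dependent EPPF (\ref{eq:SizeEPPF}) at $i=2$ together with the identity $\sum_{\ell=0}^n \gamma_0^\ell p^{-a\ell}S_a(n,\ell)=\gamma_0 p^{-a}R_{n,\gamma_0,a,p}(1,1)$ is an equivalent entry point, since the prediction rule is precisely the ratio of consecutive such EPPFs.
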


\begin{proof} We directly obtain (\ref{eq:z12}) by setting $i=1$ and $z_{i+1}=2$ in (\ref{eq:PredictRulej})  and using the recursive definition of 
$R_{n,\gamma_0,a,p}(1,1)$ in (\ref{eq:R}).
 \end{proof}

 \begin{cor}[Simpson's Index of Diversity] Given the model parameters $\thetav=\{\gamma_0, a, p\}$, the probability for two individuals uniformly at random selected from a random sample, whose size follows $n\sim\emph{\mbox{gNB}}(\gamma_0,a,p)$ and is larger than two, to be in two different  groups  %$z_1=z_2$ in a %sample of size $n$ 
can be expressed as
\begin{align}\label{eq:z12_random}
S_{\thetav} &:=  P(z_1\neq z_2|\gamma_0,a,p)  = \sum_{n=2}^\infty P(z_1\neq z_2|n,\gamma_0,a,p) \frac{\emph{\mbox{gNB}}(n;\gamma_0,a,p)}{1-\emph{\mbox{gNB}}(0;\gamma_0,a,p)-\emph{\mbox{gNB}}(1;\gamma_0,a,p)}\notag\\
&~ = \frac{\gamma_0^2 p^{-2a} e^{-{\gamma_0}\frac{1-(1-p)^a}{ap^a}}}{1-e^{-{\gamma_0}\frac{1-(1-p)^a}{ap^a}}-\gamma_0 p^{1-a}e^{-{\gamma_0}\frac{1-(1-p)^a}{ap^a}}} \sum_{n=2}^\infty  \frac{p^n}{n!} R_{n,\gamma_0,a,p}(2,2).
\end{align}
When $a=0$, we have $$P(z_1\neq z_2|\gamma_0,a=0,p)\equiv\frac{\gamma_0}{1+\gamma_0}.$$

\end{cor}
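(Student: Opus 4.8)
The plan is to establish the first (defining) equality by the law of total probability, and the second by direct substitution followed by a single crucial cancellation. Writing $E_0 := \exp\{-\gamma_0 (1-(1-p)^a)/(ap^a)\}$ for the common exponential factor, the first equality is essentially definitional: since the pair $(z_1,z_2)$ exists only when $n\geq 2$, conditioning the random sample size on $\{n\geq 2\}$ gives $P(z_1\neq z_2\mid\gamma_0,a,p)$ as the average of the conditional probabilities $P(z_1\neq z_2\mid n,\gamma_0,a,p)$ against the truncated mass function $\mbox{gNB}(n;\gamma_0,a,p)/[1-\mbox{gNB}(0;\gamma_0,a,p)-\mbox{gNB}(1;\gamma_0,a,p)]$.

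For the second equality I would substitute two ingredients. First, equation (\ref{eq:z12}) gives $P(z_1\neq z_2\mid n,\gamma_0,a,p)=\gamma_0 p^{-a}R_{n,\gamma_0,a,p}(2,2)/R_{n,\gamma_0,a,p}(1,1)$. Second, rewriting the PMF (\ref{eq:f_M0}) through the identity $\sum_{\ell=0}^n \gamma_0^\ell p^{-a\ell}S_a(n,\ell)=\gamma_0 p^{-a}R_{n,\gamma_0,a,p}(1,1)$ established just above the statement yields $\mbox{gNB}(n;\gamma_0,a,p)=\frac{p^n}{n!}E_0\,\gamma_0 p^{-a}R_{n,\gamma_0,a,p}(1,1)$. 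The key observation is that multiplying these two expressions cancels the common factor $R_{n,\gamma_0,a,p}(1,1)$, leaving $P(z_1\neq z_2\mid n)\,\mbox{gNB}(n)=\gamma_0^2 p^{-2a}E_0\,\frac{p^n}{n!}R_{n,\gamma_0,a,p}(2,2)$. Summing over $n\geq 2$ then produces precisely the numerator appearing in (\ref{eq:z12_random}).

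It then remains to match the denominator, which amounts to evaluating the two boundary terms. Using the conventions $S_a(0,0)=1$ and $S_a(1,1)=1$ (the latter from $S_a(n,1)=\Gamma(n-a)/\Gamma(1-a)$), the PMF (\ref{eq:f_M0}) gives $\mbox{gNB}(0;\gamma_0,a,p)=E_0$ and $\mbox{gNB}(1;\gamma_0,a,p)=\gamma_0 p^{1-a}E_0$, so that $1-\mbox{gNB}(0)-\mbox{gNB}(1)=1-E_0-\gamma_0 p^{1-a}E_0$, reproducing the stated denominator verbatim. Dividing the summed numerator by this denominator gives (\ref{eq:z12_random}).

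Finally, the case $a=0$ is immediate: the preceding corollary shows $P(z_1\neq z_2\mid n,\gamma_0,0,p)\equiv\gamma_0/(1+\gamma_0)$ for every $n\geq 2$, so the averaged quantity is constant in $n$ and its weighted average equals $\gamma_0/(1+\gamma_0)$ regardless of the mixing weights. No genuine obstacle arises here; the only point requiring care is the cancellation of $R_{n,\gamma_0,a,p}(1,1)$, which is exactly what collapses the otherwise intractable sum over $n$ into the clean closed form.
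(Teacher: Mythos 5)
Your proof is correct and follows exactly the route the paper intends (the paper states this corollary without a separate written proof, but the displayed first equality plus the preceding results make the derivation clear): law of total probability conditioned on $n\geq 2$, substitution of (\ref{eq:z12}) and of the PMF rewritten via the identity $\sum_{\ell=0}^n \gamma_0^\ell p^{-a\ell}S_a(n,\ell)=\gamma_0 p^{-a}R_{n,\gamma_0,a,p}(1,1)$, cancellation of $R_{n,\gamma_0,a,p}(1,1)$, and evaluation of the two boundary masses. The only detail left tacit is that $S_a(1,0)=0$ (needed for $\mbox{gNB}(1;\gamma_0,a,p)=\gamma_0 p^{1-a}E_0$), which follows immediately from the definition of $S_a(n,l)$ as a sum over compositions into $l$ positive parts.
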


Under this construction, 
given a random species sample $(z_1,\ldots,z_{n})$, with a prior distribution on $\thetav$ as $p_\Theta(\thetav)$, the posterior mean of Simpson's index of diversity is expressed as
\beq
S %=
 %p(z_1\neq z_2|z'_1,\ldots,z'_{n'}) 
=  \int %\sum_{n'=2}^\infty  \frac{p_N(n'|\theta)p(z'_1\neq z'_2|n',\theta)}{1-p_N(0|\theta)-p_N(1|\theta) } 
S_{\thetav} p(\thetav|z_1,\ldots,z_n) d\thetav,
\eeq
where $$ p(\thetav|z_1,\ldots,z_n) = \frac{p(z_1,\ldots,z_n, n|\thetav)p_\Theta(\thetav) }{\int p(z_1,\ldots,z_n, n|\thetav)p_\Theta(\thetav) d\thetav}.$$
In the next section we show how to peform MCMC estimation for the model from which we will derive the posterior value for Simpson's index of diversity.

\section{Illustrations}\label{sec:results}

Species abundance data of a sample is usually represented with a set of frequency counts $M=\{m_1,m_2,\ldots\}$, where $m_i$ denotes the number of species that have been observed $i$ times in the sample. %of size $i$, % for $i\in\{1,2,\ldots\}$, 
This data can also be converted into a sequence of group indices $\zv=(z_1,\ldots,z_n)$ or a group-size vector $(n_1,\ldots,n_l)$, where $n_k$ is the number of individuals in group $k$, $n = \sum_i  im_i=\sum_{k=1}^l n_k$ is the size of the sample and $l=\sum_i m_i$ is the number of distinct groups in the sample. For example, we may represent $M=\{m_1=2, m_2=1, m_3=2\}$ as $\zv=(1,2,3,3,4,4,4,5,5,5)$ or $(n_1,\ldots,n_5)=(1,1,2,3,3)$. For a sample of species frequency counts, we use (\ref{eq:f_Z_M})  as the likelihood for the model parameters $\thetav=\{\gamma_0,a,p\}$. With appropriate priors imposed on $\thetav$, we use MCMC to obtain posterior samples $\thetav^{(j)}=\{\gamma_0^{(j)}, a^{(j)}, p^{(j)}\}$ and then calculate $S_{\thetav^{(j)}}$.
The details of MCMC update equations are provided in the Appendix.  

\subsection{Estimation of T-cell Receptor Diversity}\label{sec:TCR}
An important characteristic of the immune system is the diversity of T-cell receptors (TCRs) \citep{nikolich2004many,ferreira2009non}. As the number of distinct TCRs might be extremely high in the body, one usually investigates TCR diversity by collecting a sample of T-cells and determining the number of distinct TCR sequences and their respective abundances (counts) in that sample. For example,  a  Bayesian semiparametric approach is proposed in \citet{guindani2014bayesian}  to estimate TCR diversity of regulatory, Treg, and conventional T-cells, Tconv, across samples of two healthy and three diabetic mice; the TCR diversity there is defined as the number of distinct TCR sequences in a sample, including $k'$ observed  distinct TCR sequences and $k_0$ unobserved ones due to censoring of zero counts. In this paper, we estimate TCR diversity by calculating Simpson's index of diversity given a sample of species frequency counts.

 \begin{figure}[!tb]
 
 \begin{center}
\includegraphics[width=0.85\columnwidth]{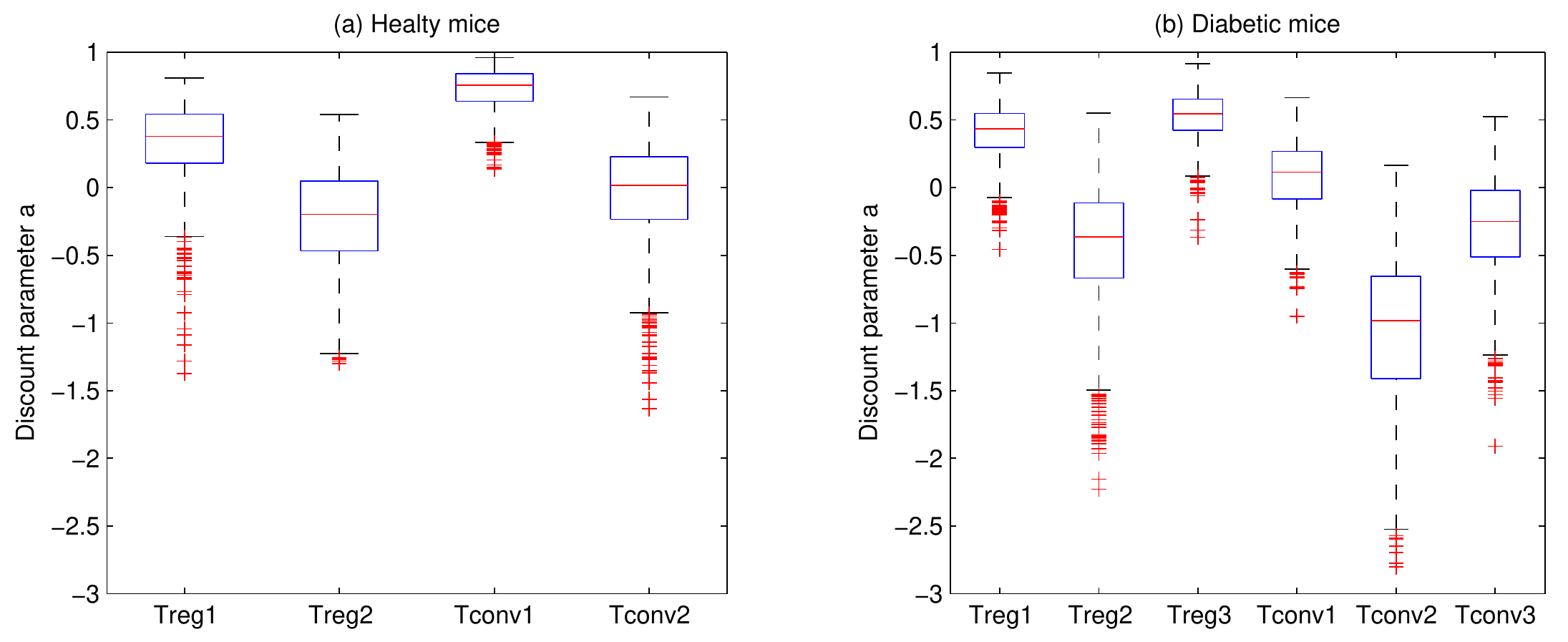}
\end{center}
\vspace{-6mm}
\caption{ \label{fig:Tcell_a} \small Box plots of $\{a^{(j)}\}_{j=1,N}$, the posterior MCMC samples of the discount parameter $a$, for regulatory, Treg, and conventional T-cells, Tconv, across various samples of (a) two healthy and (b) three diabetic mice.
}
 
\begin{center}
\includegraphics[width=0.85\columnwidth]{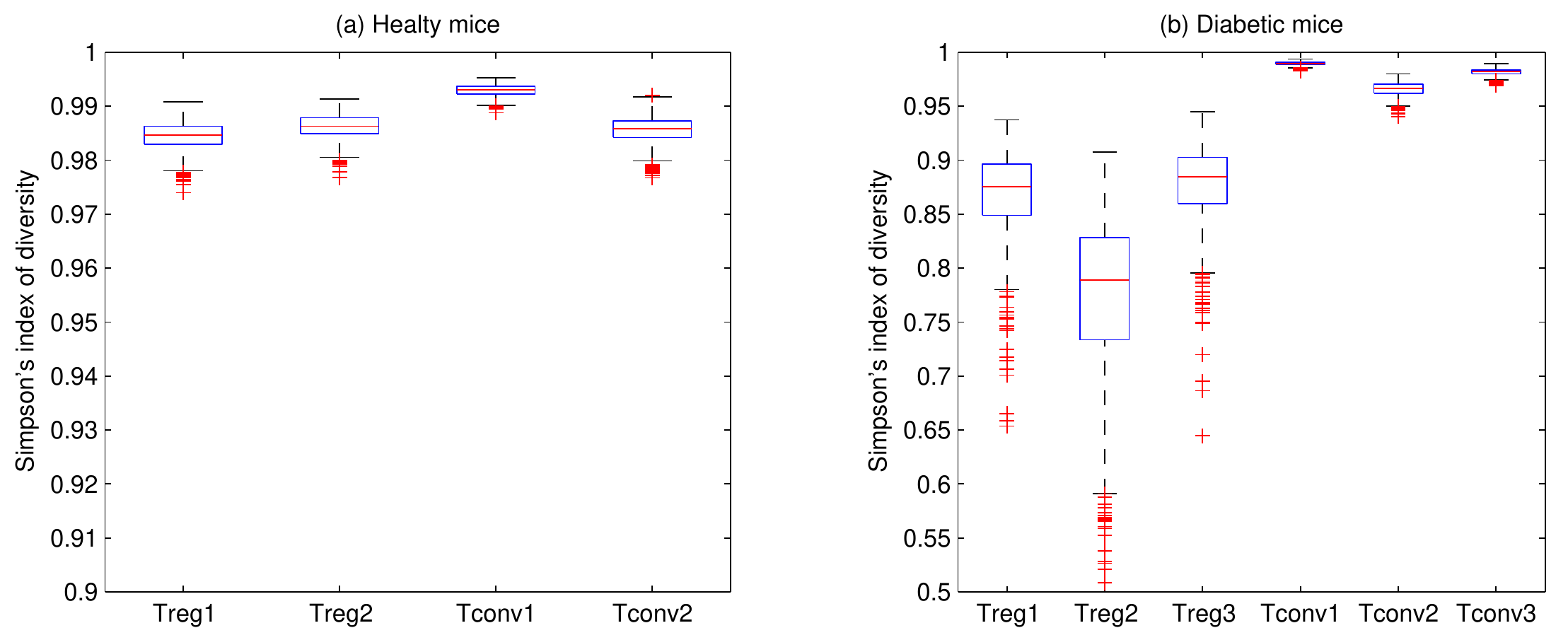}
\end{center}
\vspace{-6mm}
\caption{ \label{fig:Tcell} \small Box plots of $\{S_{\thetav^{(j)}}\}_{j=1,N}$, the posterior MCMC samples of Simpson's index of diversity, for regulatory, Treg, and conventional T-cells, Tconv, across various samples of (a) two healthy and (b) three diabetic mice. 
}%\vspace{-2mm}
%\end{figure}

%\begin{figure}[!tb]

\end{figure}

Considering the same TCR species abundance frequency count dataset used in \citet{ferreira2009non} and presented in Table 2 of \citet{guindani2014bayesian},  we compare Simpson's indice of diversity of the TCRs of Treg and Tconv across samples of two healthy and three diabetic mice. For example, for Treg, we have 
$M=\{40, 5, 5, 2, 3\}$ with $i\in\{1,2,3,4,5\}$  for the sample  from heathy mouse 1, and we have  $M =\{ 8, 1, 2,  1, 1, 1\}$ with $i\in\{1,2,3,5,36,40\}$ for the sample from diabetic mouse 1. For each sample of T-cells, we consider 2000 MCMC iterations and collect the last 1000 MCMC samples $\{\thetav^{(j)}\}_{1,1000}$.

Figure  \ref{fig:Tcell_a} shows the box plots of the MCMC posterior samples of the discount parameter~$a$ in various samples of regulatory and conventional T-cells for the healthy and diabetic mice. We find no clear associations between the posteriors  of $a$ and whether the mice are healthy or diabetic or whether the T-cells are regulatory or conventional. 

As shown in Figure  \ref{fig:Tcell},  using the samples for the diabetic mice, the estimated Simpson's indices of diversity of the TCRs for regulatory T-cells are considerably lower than those for conventional T-cells; whereas for  the healthy mice, no clear differences on TCR diversity are found. % between  regulatory and conventional T-cells.  
Comparing Figures \ref{fig:Tcell_a} and \ref{fig:Tcell}, one may also not find clear relationships between the estimated values of $a$ and the estimated Simpson's indices of diversity, which suggests that for the generalized negative binomial process, the discount parameter $a$ alone may not be a good indicator for species evenness measured by Simpson's index of diversity. \citet{guindani2014bayesian} showed that diabetic mice tended to have a smaller number of distinct TCRs in a sample of regulatory T-cells than in a sample of conventional T-cells. Our comparison of Simpson's indices of diversity, which measure species evenness and hence complementary to the  comparison of species richness studied in
\citet{guindani2014bayesian}, provides additional evidence to suggest that for diabetic mice, the TCR diversity of regulatory T-cells is lower than that of conventional T-cells.

\subsection{Genomic Data Analysis}

An important research topic in genomics is the analysis of expressed sequence tag (EST) data, which arise by  sequencing complementary DNA (cDNA) libraries consisting of millions of genes. The number of ESTs from a particular gene indicates  the expression level of that gene. It is typical that  only a small portion of the cDNA is  sequenced in a sample due to cost constrains, and one need to rely on this sample to estimate population properties. 
We consider a tomato flower EST dataset, previously analyzed in  \citet{mao2002poisson} and \citet{lijoi2007bayesian}, that consists of 2586 ESTs from 1825 genes as $M= \{1434, 253, 71, 33, 11, 6, 2, 3, 1, 2, 2, 1, 1, 1, 2, 1, 1\}$ for 
       $i \in \{1,\ldots,14\} \bigcup\{16, 23, 27\}$. We convert $\{m_i\}_i$ into $(z_1,\ldots,z_{2586})$.
To evaluate the accuracy of the proposed nonparametric Bayesian estimator in (\ref{eq:z12_random}),  we consider this relatively large sample as the population and treat   $\widehat{S}=0.9993$, a sample estimate with (\ref{eq:S_hat}), as the ``true'' Simpson's index of diversity for the population.

\begin{table}[!t]
\caption{Simulation study based on 100 expressed sequence tag (EST) samples of size $50$ uniformly at random selected from a population of 2586 ESTs from 1825 distinct genes, with various settings of the discount parameter $a$. A sample estimate of 0.9993 using all the 2586 ESTs  is considered as the ``true'' Simpson's index of diversity for  the population.}
\begin{center}\label{tab:simulation}
\begin{tabular}{|c|c|c|c|c|}
\hline
Parameter Setting & Mean Bias  & Median Bias &  50\% Coverage  & 95\% Coverage \\
 & ($\times 10^{-3}$) & ($\times 10^{-3}$) &    & \\
\hline
$a=-1$ & 10.37 & 10.60 &0\% & 0\%\\
$a=0$ &3.05 & 3.31 & 0\% & 0\%\\
$a=0.5$ &1.07 & 1.40  &18\% & 85\%\\ 
$a<0$ & 3.51  & 3.78 &  0\% & 0\% \\
$0\le a<1$ &0.48 & 1.11& 62\% & 98\%\\
$a<1$ & \textbf{0.41}  & \textbf{1.09}& \textbf{69\%} & \textbf{99\%}\\
\hline
\end{tabular}
\end{center}
\label{default}
\end{table}%

        We randomly select an EST sample of size $n=50$ from $(z_1,\ldots,z_{2586})$ to estimate the Simpson's index of diversity of the population. 
       For each selected EST sample,   we use MCMC to obtain posterior samples $\thetav^{(j)}=\{\gamma_0^{(j)}, a^{(j)}, p^{(j)}\}$ and then calculate $S_{\thetav^{(j)}}$; we consider 2000 MCMC iterations and collect one sample in every five iterations in the last 1000 MCMC iterations, leading to $N=200$ total samples $\{\thetav^{(j)}\}_{1,200}$; we find from the collected MCMC samples the mean, %$S=\sum_{j=1}^N S_{\thetav^{(j)}}/N$, 
       median, 50 percentile range and 95 percentile range of $\{S_{\thetav^{(j)}}\}$, and compare these values against $0.9993$.    We repeat the same procedure 100 times and find the averages among these 100 times of the absolute distances from the mean and median to 0.9993, and the probabilities for 0.9993 to be covered by the 50 and 95 percentile ranges. 
       
       We summarize the results in Table \ref{tab:simulation}, where we fix $a$ to be $-1$, $0$ or $ 0.5$, or let $a$ be inferred for each EST sample and restrict it to be $a<0$, $0\le a<1$ or $a<1$. 
       It is clear that allowing $a$ to be freely adjusted within $(-\infty,1)$ leads to a more accurate estimation of Simpson's index of diversity using a sample of the population, demonstrating the effectiveness of the generalized negative binomial process on the analysis of EST sequence counts. Similar simulation results are observed on the TCR sequence count dataset studied in Section \ref{sec:TCR}.

In conclusion, we have introduced a sample size dependent species model, which allows flexible modeling of species abundance frequency count data. We gain this flexibility with a simple model and consequently posterior inference via MCMC is also simple. The paper provides a framework to jointly model a single random count and its exchangeable random partition. 
It is natural to extend the same framework  to mixture modeling, where the usual task is to partition a set of data points into exchangeable clusters, where both the number and sizes of clusters are unknown and need to be inferred.  The techniques developed here to model a random count vector also serve as the foundation for  \citet{NBP_CountMatrix} to construct a family of nonparametric Bayesian priors for infinite random count matrices, and for \citet{BNBP_EPPF} to define a prior distribution that describes the random partition of a count vector into a latent random count matrix.

%
%\section*{Acknowledgements} 
%The authors are grateful to the Editor, an Associate Editor and two anonymous referees for their invaluable comments and suggestions that help significantly improve the paper.  The first author also thanks  Lawrence Carin, Fernando A. Quintana and  Peter M\"uller   for their helpful comments and suggestions on an early draft of this paper. %helpful 
%%discussions. 

\small
%\singlespacing
\bibliographystyle{plainnat}
\bibliography{References072014}
\normalsize

\appendix
\section{Proof for Theorem \ref{thm:compoundPoisson}}
\begin{proof}
%Here we establish the marginal model for the $(n_k)$ with $G$ integrated out. To this end 
Let us consider the process $X_G$, conditional on $G$, given by
$$X_G(A)=\sum_{k} n_k\,{\bf 1}(\omega_k\in A).$$  
Now it is easy to see that
$$\E[\exp\{-\phi X_G(A)\}|G]=\exp\{-G(A)(1-e^{-\phi})\},$$
and using the well known result for homogeneous  L\'evy processes, we have
\beq
\E[\exp\{-\lambda G(A)\}]=\exp\left\{-G_0(A)\,\int_0^\infty \left[1-e^{-\lambda s}\right]\,\rho(\d s)\right\}.\label{one}
\eeq
Now, the key observation is the following identity:
$$1-e^{-(1-e^{-\phi})s}=1-e^{-s}\sum_{j=0}^\infty \frac{s^j}{j!}e^{-\phi j}=(1-e^{-s})-e^{-s}\sum_{j=1}^\infty \frac{s^j}{j!}e^{-\phi j}.$$
Let us put this to one side for now and consider the model for $\tilde{X}$ given by
$$\tilde{X}(A)=\sum_{k=1}^{l} n_k\,{\bf 1}(\omega_k\in A)$$
with $l\sim\mbox{Po}(\gamma G_0(\Omega))$ for some non-negative $\gamma$ and 
independently $P(n_k=j)=\pi_j$ for some $\pi_j\leq 1$ and $j\in\{1,2,\ldots\}$.
Now given $l$, we have
$$\E[ \exp\{-\phi \tilde{X}(A)\}|l]=\prod_{k=1}^{l} \E [\exp\{-\phi n_k\,{\bf 1}(\omega_k\in A)\}]$$
and each of these expectations is given by
$$\psi=\sum_{j=1}^\infty e^{-\phi j}\pi_j.$$
Thus
$$\E[\exp\{-\phi \tilde{X}(A)\}]=%\E[\psi^L]=
\exp\{-\gamma\, G_0(A)\, (1-\psi)\}$$
which is given by
\beq
\exp\left\{-\gamma\,G_0(A)\, \left[1-\sum_{j=1}^\infty e^{-\phi j}\,\pi_j\right]\right\}.\label{two}
\eeq
Comparing (\ref{one}) and (\ref{two}) we see that we have a match when
$$\gamma=\int_0^\infty (1-e^{-s})\,\rho(\d s)$$
and
$$\pi_j=\frac{\int_0^\infty s^j\,e^{-s}\,\rho(\d s)}{j! \gamma}\,,$$
and note that it is easy to verfy that
$$\sum_{j=1}^\infty \pi_j=1.$$
\end{proof}

\section{Proof for Corollary \ref{thm:predict}}

This follows directly  from Bayes' rule, since $p(z_i|\zv^{-i},n,\gamma_0,\rho) = \frac{p(z_i,\zv^{-i},n|\gamma_0,\rho)}{p(\zv^{-i},n|\gamma_0,\rho)}$, where 
% if we express the ECPF as
%\beq
%f(z_i,\zv^{-i},m|\gamma_0,\rho)= \frac{\gamma_0^{l^{-i}}}{m!} e^{\gamma_0\int_{0}^\infty(e^{-r}-1)\rho(dr)} 
%\Big(\gamma_0\int_0^\infty re^{-r}\rho(dr)\Big)^{\delta(z_i=l^{-i}+1)} \prod_{k=1}^{l^{-i}} \int_0^\infty r^{n_k^{-i}+\delta(z_i=k)} e^{-r} \rho(dr).\notag
%\eeq
$$p(z_i,\zv^{-i},n|\gamma_0,\rho)=\,\,\,\,\,\,\,\,\,\,\,\,\,\,\,\,\,\,\,\,\,\,\,\,\,\,\,\,\,\,\,\,\,\,\,\,\,\,\,\,\,\,\,\,\,\,\,\,\,\,\,\,\,\hfill$$%\sum_{z_i}f(z_i,\zv^{-i},m|\gamma_0,\rho) = 
$$n^{-1\,} p(\zv^{-i},n-1|\gamma_0,\rho)\,\left[\gamma_0\int_0^\infty se^{-s}\rho(\d s)\,{\bf 1}(z_i=l^{-i}+1) \,+ \,\sum_{k=1}^{l^{-i}} \frac{\int_0^\infty s^{n_k^{-i}+1} e^{-s} \rho(\d s)}{\int_0^\infty s^{n_k^{-i}} e^{-s} \rho(\d s)} {\bf 1}(z_i=k)  \right].\notag$$
Marginalizing out the $z_i$ from $p(z_i,\zv^{-i},n|\gamma_0,\rho)$ we have
\beqs
&p(\zv^{-i},n|\gamma_0,\rho)=%\sum_{z_i=1}^{l^{-i}+1}f(z_i,\zv^{-i},m|\gamma_0,\rho)= %\sum_{z_i}f(z_i,\zv^{-i},m|\gamma_0,\rho) = 
n^{-1}\,p(\zv^{-i},n-1|\gamma_0,\rho)\left[{\gamma_0\int_0^\infty se^{-s}\rho(\d s)+ \sum_{k=1}^{l^{-i}} \frac{\int_0^\infty s^{n_k^{-i}+1} e^{-s} \rho(\d s)}{\int_0^\infty s^{n_k^{-i}} e^{-s} \rho(\d s)}  }\right].\notag
\eeqs

\section{Derivations for the GNBP}

Marginalizing out $\lambda$ from  %a generalized gamma-Poisson mixture distribution
$[n| \lambda]\sim\mbox{Po}(\lambda)$ with $\lambda\sim{{}}\mbox{gGamma}[\gamma_0,a,p/(1-p)]$, leads to a generalized NB distribution; $n\sim\mbox{gNB}(\gamma_0,a,p)$, with shape parameter $\gamma_0$, discount parameter $a<1$, and probability parameter $p$. The probability generating function (PGF) is given by
$$ \E[t^n]   = \E[\E[t^n|\lambda]] 
 = \exp\left\{-\frac{\gamma_0[(1-pt)^a-(1-p)^a)]}{ap^a}\right\}, %=e^{\gamma_0\frac{(1-p)^a}{ap^a}} \sum_{k=0}^\infty \frac{1}{k!} {\left(\frac{-\gamma_0}{ap^a}\right)^k} \sum_{j=0}^\infty \binom{ak}{j}(-pz)^j.
$$
the mean value is $\gamma_0\big[p/(1-p)\big]^{1-a}$ and the variance is $\gamma_0\big[p/(1-p)\big]^{1-a}(1-ap)/(1-p)$. The PGF  was originally presented in \citet{willmot1988remark} and \citet{gerber1992generalized}. With the PGF written  as 
$$\begin{array}{ll}
\E(t^n)  & =\exp\left\{\gamma_0\frac{(1-p)^a}{ap^a}\right\}\sum_{k=0}^\infty \frac{1}{k!} {\left(\frac{-\gamma_0(1-pt)^a}{ap^a}\right)^k}  \\ \\
& =\exp\left\{\gamma_0\frac{(1-p)^a}{ap^a}\right\} \sum_{k=0}^\infty \frac{1}{k!} {\left(\frac{-\gamma_0}{ap^a}\right)^k} \sum_{j=0}^\infty \binom{ak}{j}(-pt)^j,\end{array}$$
we can derive the PMF as 
\beqs\label{eq:f_M}
p_N(n|\gamma_0,a,p)
 = %e^{{\gamma}\frac{(1-p)^a}{ap^a}}(-p)^m \sum_{k=0}^\infty \frac{1}{k!}{\left(\frac{-\gamma}{ap^a}\right)^k}  \binom{ak}{m} =
\frac{p^n}{n!}e^{{\gamma_0}\frac{(1-p)^a}{ap^a}} \sum_{k=0}^\infty \frac{1}{k!}{\left(-\frac{\gamma_0}{ap^a}\right)^k} \frac{\Gamma(n-ak)}{\Gamma(-ak)}, ~n=0,1,\ldots.
%\\
%&= \frac{\Gamma(m-a+\gamma_0p^{-a})}{m!\Gamma(1-a+\gamma_0p^{-a})}\gamma_0e^{-\gamma_0\frac{(1-(1-p)^a)}{ap^{a}}}p^{m-a}
%&=\exp\left({\gamma}\frac{(1-p)^a}{ap^a}\right)(-p)^m  \sum_{k=0}^\infty \frac{\left(\frac{-\gamma}{ap^a}\right)^k}{k!}  (ak)(ak-1)\cdots(ak-m+1)
\eeqs %\end{align}
We can also generate %$n$ from a generalized NB distribution;
$n\sim{{}}\mbox{gNB}(\gamma_0,a,p)$ %which can be generated 
from a compound Poisson distribution, as
$
n=\sum_{k=1}^l n_k$, with the  $(n_k)$ independent from $\mbox{TNB}(a,p)$, and $l\sim\mbox{Po}\big(\frac{\gamma_0(1-(1-p)^a)}{ap^a}\big),
$
where $\mbox{TNB}(a,p)$ denotes a truncated NB distribution, with PGF $\E[t^{u}] =  %\sum_{n=1}^\infty  \frac{\Gamma(n-a)}{n!\Gamma(-a)}\frac{p^n(1-p)^{-a}}{1-(1-p)^{-a}} z^n =
\frac{1-(1-pt)^a}{1-(1-p)^a}$ and
PMF
\beqs\label{eq:TNB}
p_U(u|a,p)= \frac{\Gamma(u-a)}{u!\Gamma(-a)}\frac{p^u(1-p)^{-a}}{1-(1-p)^{-a}},~u=1,2,\ldots.
\eeqs
Note that as $a\rightarrow 0$, $u\sim\mbox{TNB}(a,p)$ becomes a logarithmic distribution \citep{LogPoisNB} with PMF $p_U(u|p)=\frac{-1}{\ln(1-p)}\frac{p^u}{u}$ %\citep{johnson2005univariate} 
and $n\sim\mbox{gNB}(\gamma_0,a,p)$ becomes a NB distribution; $n\sim\mbox{NB}(\gamma_0,p)$. The truncated NB distribution with $0<a<1$ is the extended NB distribution introduced in \citet{engen1974species}. 

Here we provide a useful identity which we will be used later in this section.
Denote by $\sum_{*}$ as the summation over all sets of positive integers $(n_1,\ldots,n_l)$ with ${\sum_{k=1}^l n_k = n}$.  We call $n\sim\mbox{SumTNB}(l,a,p)$ as a sum-truncated NB distributed random variable that can be generated via $n=\sum_{k=1}^l n_k, ~n_k\sim\mbox{TNB}(a,p)$. Using both (\ref{eq:TNB}) %and PGF of the truncated NB distribution 
and $$
\left[\frac{1-(1-pt)^a}{1-(1-p)^a}\right]^{l}= \frac{ \sum_{k=0}^l \binom{l}{k} (-1)^k \sum_{j=0}^\infty\binom{ak}{j} (-pt)^j }{  [1-(1-p)^a]^l},
$$  we may express the PMF of the sum-truncated NB distribution 
as
$$p_N(n|l,a,p) = 
\sum_{*} \prod_{k=1}^l {\frac{\Gamma(n_k-a)}{n_k!\Gamma(-a)} \frac{p^{n_k}(1-p)^{-a}}{1-(1-p)^{-a}}}%\notag\\
=\frac{p^n}{ [1-(1-p)^a]^l} {\sum_{k=0}^l (-1)^k \binom{l}{k}  \frac{\Gamma(n-ak)}{n!\Gamma(-ak)}  },%= \frac{p^{m}}{ [ \frac{1-(1-p)^{a}}{a}]^{l}} \sum_{\sum_{k=1}^l n_k=m} \prod_{k=1}^l \frac{\Gamma(n_k-a)}{n_k!\Gamma(1-a)}
$$ 
leading to the identity shown  in (\ref{eq:gStirling}).

The EPPF %of the generalized NB process 
is the ECPF in (\ref{eq:f_Z_M}) divided by the marginal distribution of $n$ in (\ref{eq:f_M}), given by
\begin{align} \label{eq:EPPF1}
 p(\zv|n,\gamma_0,a,p) &= %\frac{ f(\zv,m|\gamma_0,a,p)}{\mbox{gNB}(m;\gamma_0,a,p)} = 
 p_n(z_1,\ldots,z_n|n) % \notag\\ %\frac{f(\zv,m|\gamma_0,a,p) }{f(m|G_0,a,p) } =
=\frac{e^{-\frac{\gamma_0}{ap^{a}}} }{ \sum_{k=0}^\infty \frac{1}{k!}{\left(-\frac{\gamma_0}{ap^a}\right)^k}\frac{\Gamma(n-ak)}{\Gamma(-ak)} }  \gamma_0^{l{}}  p^{-al{}} %\prod_{k:\omega_k\in\mathcal{D}}
\prod_{k=1}^{l{}}\frac{\Gamma(n_k-a)}{\Gamma(1-a)}.
\end{align}

Using %Corollary \ref{cor:table}, 
the EPPF  in (\ref{eq:EPPF}) and the identity in (\ref{eq:gStirling}), the conditional distribution of the number of clusters $l$ in a sample of size $n$ %conditioning on the sample size $m$ 
can be expressed as
\begin{align}\label{eq:f_L2}
p_L(l|n,\gamma_0,a,p)& = \frac{1}{l!}\sum_{*}\frac{n!}{\prod_{k=1}^l n_k!}p(\zv|n,\gamma_0,a,p)%\notag\\
 =\frac{ \gamma_0^l p^{-al} S_a(n,l)}{e^{\frac{\gamma_0}{ap^{a}}}   \sum_{k=0}^\infty \frac{1}{k!} {\left(\frac{-\gamma_0}{ap^a}\right)^k}  \frac{\Gamma(n-ak)}{\Gamma(-ak)}} ,
\end{align}
which, since $\sum_{l=0}^n p_L(l|n,\gamma_0,a,p)=1$, further leads to identity 
\beq
e^{\frac{\gamma_0}{ap^{a}}}   \sum_{k=0}^\infty \frac{1}{k!} {\left(\frac{-\gamma_0}{ap^a}\right)^k}  \frac{\Gamma(n-ak)}{\Gamma(-ak)} = \sum_{l=0}^n \gamma_0^l p^{-al} S_a(n,l). \notag
\eeq
Applying this identity on (\ref{eq:f_M}), (\ref{eq:EPPF1}) and (\ref{eq:f_L2}) lead to (\ref{eq:f_M0}),
(\ref{eq:EPPF})  and (\ref{eq:f_L2_0}).

\section{MCMC Inference}

For the gNBP, the ECPF in (\ref{eq:f_Z_M}) defines a fully factorized likelihood for $\gamma_0$, $a$ and $p$. With a gamma prior $\mbox{Gamma}(e_0,1/f_0)$ placed on $\gamma_0$, we have %the conditional posterior of $\gamma_0$ as
\beqs
(\gamma_0|-)\sim\mbox{Gamma}\bigg(e_0 + l{},\frac{1}{f_0+ \frac{1-(1-p)^a}{ap^a}}\bigg).
\eeqs
As $a\rightarrow 0$, we have
$
(\gamma_0|-)\sim\mbox{Gamma}\left(e_0 + l{},\frac{1}{f_0- \ln(1-p)}\right). %\notag
$ This paper sets $e_0=f_0=0.01$.

%\textbf{\emph{Sample $a$}}.
Since $a<1$, we have $\tilde{a}=\frac{1}{1+(1-a)} \in(0,1)$. With a uniform prior placed on $\tilde{a}$ in $(0,1)$ and the likelihood of gNBP in (\ref{eq:f_Z_M}), we use the  griddy-Gibbs sampler \citep{griddygibbs} to
sample $a$ from a discrete distribution
\beq
P(a|-)\propto e^{-\gamma_0\frac{1-(1-p)^a}{ap^{a}}}
p^{-al{}}
\prod_{k=1}^{l{}} \frac{\Gamma(n_k-a)}{{\Gamma(1-a)} }
%f(\zv,n|\gamma_0,a,p)
\eeq
over a grid of points $\frac{1}{1+(1-a)}=0.0001,0.0002,\ldots,0.9999$.

%\emph{\textbf{Sample $p$}}. 
We place a uniform prior on $p$ in $(0,1)$.
When $a\rightarrow 0$, the likelihood of the gNBP in (\ref{eq:f_Z_M}) becomes proportional to $p^{m}(1-p)^{\gamma_0}$, thus we have %with a %beta prior $\mbox{Beta}(1,1)$ on $p$, we have 
$
%\lim_{a\rightarrow 0}
(p|-)\sim\mbox{Beta}(1+n,1+\gamma_0).
$
% the same as that of the NB process count-mixture model \citep{NBP2012}.
When $a\neq 0$,  %with %We %place
%a uniform prior placed on $p$ in $(0,1)$ and the likelihood in (\ref{eq:f_Z_M}),  
we use the  griddy-Gibbs sampler to sample $p$ from a discrete distribution
\beq
P(p|-)\propto e^{-\gamma_0\frac{1-(1-p)^a}{ap^{a}}}
 p^{n-al{}}
\eeq over %consider
a grid of points $p=0.001,0.002,\ldots,0.999$.

\end{document}